\documentclass[a4paper,UKenglish,cleveref, autoref, thm-restate]{lipics-v2021}

\usepackage{amsmath}
\usepackage[linesnumbered]{algorithm2e}

\usepackage[textsize=footnotesize]{todonotes}
\usepackage[utf8]{inputenc}

\usepackage{enumerate}
\usepackage{float}
\usepackage{graphicx}

\graphicspath{ {figures/} }

\usepackage{algorithm2e}

\usepackage{xcolor}

\newcommand{\sep}[1]{\textsf{sep}(#1)}
\newcommand{\psep}[1]{\textsf{psep}}

\newcommand{\weight}[1]{\textsf{weight}(#1)}

\newcommand{\old}[1]{{}}

\newcommand{\rect}[2]{{#1}_{R}}

\usepackage{complexity}
 \nolinenumbers

\bibliographystyle{plainurl}

\title{Space-Efficient Algorithms for Reachability in Directed Geometric Graphs}

\author{Sujoy Bhore}{Indian Institute of Science Education and Research, Bhopal, India.}{sujoy.bhore@gmail.com}{0000-0003-0104-1659}{}

\author{Rahul Jain}{Fernuniversit\"at in Hagen, Germany}{rahul.jain@fernuni-hagen.de}{https://orcid.org/0000-0002-8567-9475}{}
\authorrunning{S. Bhore and R. Jain}

\Copyright{Sujoy Bhore and Rahul Jain}
\hideLIPIcs
\ccsdesc[300]{Theory of computation~Computational geometry}

\keywords{Reachablity, Geometric intersection graphs, Space-efficient algorithms}

\category{}

\relatedversion{}

\EventEditors{John Q. Open and Joan R. Access}
\EventNoEds{2}
\EventLongTitle{42nd Conference on Very Important Topics (CVIT 2016)}
\EventShortTitle{CVIT 2016}
\EventAcronym{CVIT}
\EventYear{2016}
\EventDate{December 24--27, 2016}
\EventLocation{Little Whinging, United Kingdom}
\EventLogo{}
\SeriesVolume{42}
\ArticleNo{23}

\begin{document}

\maketitle

\begin{abstract}
The problem of graph \textsc{Reachability} is to decide whether there is a path from one vertex to another in a given graph. In this paper, we study the \textsc{Reachability} problem on three distinct graph families - intersection graphs of Jordan regions, unit contact disk graphs (penny graphs), and chordal graphs. For each of these graph families, we present space-efficient algorithms for the \textsc{Reachability} problem.

For intersection graphs of Jordan regions, we show how to obtain a ``good'' \emph{vertex separator} in a space-efficient manner and use it to solve the \textsc{Reachability} in
polynomial time and $O(m^{1/2}\log n)$ space, where $n$ is the number of Jordan regions, and $m$ is the total number of crossings among the regions.
We use a similar approach for chordal graphs and obtain a polynomial-time and
$O(m^{1/2}\log n)$ space algorithm, where $n$ and $m$ are the number of vertices and edges, respectively. However, we use a more involved technique for unit contact disk graphs (penny graphs) and obtain a better algorithm.
We show that for every $\epsilon> 0$, there exists a polynomial-time algorithm that can solve \textsc{Reachability} in an $n$ vertex directed penny graph, using $O(n^{1/4+\epsilon})$ space. We note that the method used to solve penny graphs does not extend naturally to the class of geometric intersection graphs that include arbitrary size cliques.
\end{abstract}

\section{Introduction}
Given a directed graph $G=(V,E)$ and two of its vertices $s$ and $t$, the problem of \textsc{Reachability} is to decide if there exists a path from $s$ to $t$ in $G$. \textsc{Reachability} is one of the fundamental problems in theoretical computer science, and
for directed and undirected graphs the problem is known to be complete for the
classes \textsf{NL} and \textsf{L}, respectively (see~\cite{lewis1982symmetric, reingold2008undirected}).
The famous open question {\L} $\stackrel{?}{=}$ \NL~essentially asks if there exists
a deterministic log-space algorithm for \textsc{Reachability} or not.
Note that \textsc{Reachability} can be solved in $\Theta(n\log n)$ space and optimal time by using standard graph traversal algorithms such as DFS and BFS. Furthermore, it is known that this problem can be solved in $\Theta(\log^2 n)$-space and $n^{\Theta(\log n)}$ time~\cite{savitch1970relationships}.

In the realm of space-efficient algorithms,
the primary objective is to optimize the space-complexity of an algorithm while maintaining a polynomial-time bound. Wigderson asked in his survey of \textsc{Reachability} problems~\cite{wigderson1992complexity}, that whether there is an algorithm for \textsc{Reachability} that runs in $O(n^{1-\varepsilon})$ space (for any $\varepsilon>0$) and polynomial time.
Barnes et al.~\cite{barnes1998sublinear}
partially answered this question and showed that \textsc{Reachability} on general graphs can be solved in polynomial time and $O(n/2^{\Theta(\sqrt{\log n})})$ space. This result is followed by numerous works on various restricted graph families.
Asano and Doerr \cite{asano2011memory} presented an algorithm for grid graphs that uses $O(n^{1/2+\varepsilon})$-space, for any small $\varepsilon>0$. Imai et al.~\cite{DBLP:conf/coco/ImaiNPVW13} achieved the similar space bound for planar graphs. Later, Asano et al.~\cite{asano2014widetilde} improved the space bound to $\Tilde{O}(n^{1/2})$ for planar graphs. Recently, this bound has been improved to $O(n^{1/4+\varepsilon})$ for grid graphs \cite{DBLP:conf/fsttcs/JainT19}. Besides, Chakraborty et al.~\cite{DBLP:conf/fsttcs/ChakrabortyPTVY14} studied
\textsc{Reachability} for graphs with higher genus and gave an $\Tilde{O}(n^{2/3}g^{1/3})$-space bound algorithm, and an $\Tilde{O}(n^{2/3})$ space algorithm for \textsf{H} minor-free graphs.
For layered planar graphs, Chakraborty and Tewari \cite{DBLP:journals/toct/ChakrabortyT18} showed that, there is an $O(n^{\varepsilon})$-space and polynomial algorithm. Gupta et al.~\cite{gupta2019reachability} showed that given a pair $\{s,t\}$ and an embedding of an $O(\log n)$ genus graphs, \textsc{Reachability} from $s$ to $t$ in $G$ can be decided \emph{unambiguously} in logspace (i.e. \textsc{Reachability} is in the class \UL).

Tree-decomposition and the associated \emph{treewidth} is an essential notion of the graphs. Many problems which are computationally hard on general graphs are efficiently solvable on graphs of bounded Treewidth~\cite{arnborg1989linear}. Graphs of small Treewidth also have small \emph{vertex-separators}, which is a small set of vertices of the graph, removal of which divides the graph into pieces whose size are at most a fraction of the original graph. Recently, Jain and Tewari \cite{DBLP:conf/isaac/JainT19} showed that given an $n$ vertex directed graph of treewidth $w$ along with its tree decomposition, there exists an algorithm for \textsc{Rechability} problem that runs polynomial time and $O(w\log n)$ space. They achieved this result by using the vertex separator of small Treewidth graphs. They formalized the connection between \emph{Vertex Separator} and \textsc{Reachability} problems, which has several consequences. For the sake of completeness, we state their result below.

\begin{theorem}[\cite{DBLP:conf/isaac/JainT19}]\label{treewidth}
Let $\mathcal{G}$ be a class of directed graphs and $w:\mathbb{N}\times \mathbb{N} \rightarrow \mathbb{N}$ be a function. If there exist an $O(w(n,m)\log n)$ space and polynomial time algorithm, that given a graph $G\in \mathcal{G}$ of $n$ vertices and $m$ edges, and a set $U$ of $V(G)$, outputs a separator of $U$ in underlying undirected graph of $G$\footnote{For a directed graph $G$, its \emph{underlying undirected graph} is the graph formed by removing the orientation of all the edges.} of size $O(w(n,m))$, then there exists an algorithm to decide \textsc{Reachability}
in $G$ that uses $O(w(n,m)\log n)$ space and polynomial time.
\end{theorem}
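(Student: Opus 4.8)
The plan is to turn the hypothesised separator procedure into a \textsc{Reachability} algorithm by a separator‑based divide‑and‑conquer through a ``shortcut'' digraph. Running the procedure with $U=V(G)$ yields a balanced separator $S$ with $|S|=O(w(n,m))$ such that every component of the undirected $G-S$ has at most $n/2$ vertices. I would then form the shortcut digraph $H$ on vertex set $S\cup\{s,t\}$, placing an arc $u\to v$ exactly when $v$ is reachable from $u$ in $G$ along a (possibly trivial) path all of whose internal vertices lie in a single component of $G-S$. Cutting any $s$–$t$ walk of $G$ at its visits to $S$ expresses it as a concatenation of such segments, and conversely every arc of $H$ lifts to a walk of $G$; hence $t$ is reachable from $s$ in $G$ iff it is in $H$. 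Since $|V(H)|=O(w)$, the latter question is settled by a plain fixed‑point iteration that stores only an $O(w)$‑sized ``reached‑so‑far'' set, provided we can decide individual arcs of $H$.

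Deciding an arc $u\to v$ of $H$ is a \textsc{Reachability} query in an induced subgraph $G[C\cup\{u,v\}]$ on at most $n/2+2$ vertices (for a component $C$ of $G-S$), and on such smaller instances we recurse — applying the separator procedure to the subgraph, which for the families of interest again lies in $\mathcal G$. Enumerating the components $C$, and testing membership in a component, is an undirected $s$–$t$‑connectivity task in $G-S$ that I would discharge in $O(\log n)$ additional space via Reingold's algorithm~\cite{reingold2008undirected}. One subtlety already arises here: recursing independently on each candidate arc of each component has branching $\Theta(w^2)$, which over the recursion would blow the running time up to quasi‑polynomial, so instead the separator should be used to split $G$ into many balanced pieces at once (an $r$‑division–style partition, obtained by recursively separating down to a size threshold), with the boundary‑to‑boundary reachability inside a piece computed by a single recursive call per piece and the threshold tuned so the piece size drops from $n$ to $n^{1-\Omega(1)}$ each round; this keeps the recursion depth roughly $O(\log\log n)$ and the total work $\mathrm{poly}(n)$.

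The hard part, and the technical heart of the argument of~\cite{DBLP:conf/isaac/JainT19}, is the space accounting: the naive recursion stores one separator (and one $O(w)$‑sized reachability frontier) per level and so pays an extra logarithmic factor, whereas the target is $O(w(n,m)\log n)$. The fix I would pursue is to carry along the recursion only a short ``navigation trace'' recording which piece was entered at each round, to recompute the separator/partition of an ancestor round on demand from that trace while reusing the current working memory, and to exploit that the separator sizes met down any branch form a geometric series, so that a single materialised separator together with the current frontier dominates the space. The remaining obligations — that the on‑demand recomputation inflates the time by only a polynomial factor (this is where the bounded, $O(\log\log n)$‑like depth and the telescoping of sizes are used) and that the space genuinely sums to $O(w(n,m)\log n)$ — are the places where the care goes; granting them, Theorem~\ref{treewidth} follows by instantiating this scheme with the assumed separator procedure.
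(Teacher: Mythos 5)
First, a point of reference: the paper does not prove Theorem~\ref{treewidth} at all --- it is imported verbatim from \cite{DBLP:conf/isaac/JainT19} and used as a black box, so there is no in-paper proof to measure your attempt against. The closest analogue inside the paper is the Section~\ref{penny} algorithm, whose skeleton (balanced separator, an auxiliary ``shortcut'' structure on the separator vertices, an iteratively updated marked set of size $O(|S|)$, and recursive calls to decide individual shortcut edges inside components) is essentially the skeleton you describe. Up to that point your outline is sound, and you correctly isolate the two real obstacles: quasi-polynomial time from naive recursion, and an extra $\log n$ factor in space from keeping one separator and one frontier per recursion level.

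The gap is that your resolution of those obstacles is exactly the content of the theorem, and the specific fixes you sketch do not work as stated. (i) Your $r$-division repair of the running time trades one problem for another: to cut $G$ into pieces of size $n^{1-\Omega(1)}$ using separators that only guarantee a constant-factor balance, you must take the union of $n^{\Omega(1)}$ separators, so the ``pseudo-separator'' on which your marked set lives has size $w\cdot n^{\Omega(1)}$, not $O(w)$; the frontier alone then exceeds the $O(w(n,m)\log n)$ space budget. (This is precisely why Section~\ref{penny} needs a separator that already splits into polynomially smaller pieces in one shot, and why that route does not transfer to the general hypothesis here.) (ii) The claim that ``the separator sizes met down any branch form a geometric series'' is unsupported and is false under the stated hypothesis, which only promises size $O(w(n,m))$ at every level with no decay. (iii) On-demand recomputation of ancestor separators from a navigation trace composes $\Theta(\mathrm{depth})$ computations each needing $O(w\log n)$ workspace simultaneously (each level must re-run its parent to learn its own input), which gives $O(w\log n\cdot\mathrm{depth})$ space, not $O(w\log n)$. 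Finally, note that your argument only ever invokes the hypothesised procedure with $U=V$ (or $U=V(\text{piece})$); the theorem is deliberately stated for an \emph{arbitrary} subset $U$, and that extra strength --- taking separators balanced with respect to the previously accumulated separator/boundary vertices so that the relevant bookkeeping shrinks geometrically --- is the lever the actual proof in \cite{DBLP:conf/isaac/JainT19} pulls. As written, your proposal defers the theorem's hard part (``granting them, the theorem follows''), so it is an outline rather than a proof.
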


In this paper, we study the \textsc{Reachability} problem on directed geometric graph families.
Many important graph families can be described as intersection graphs of more restricted types of set families. More often than not, the geometric intersection graph families provide additional geometric structures that help to generate efficient algorithms. Many problems that are \textsf{NP}-complete~on general graphs are tractable on geometric intersection graphs~\cite{clark1991unit, imai1983finding}. Indeed, such advantages have been exploited for space-efficient algorithms as well. We refer to the survey on geometric algorithms with limited work-space \cite{DBLP:journals/corr/abs-1806-05868}.

In this work, we make progress towards answering the question raised by Wigderson in his
survey~\cite{wigderson1992complexity}. We primarily design and use the \emph{Vertex separator} of some geometric graphs space-efficiently. Previously, vertex separators in the context of geometric graphs have been studied by Fox and Pach~\cite{fox2008separator}, who established several geometric extensions of the famous Lipton-Tarjan separator theorem~cite. Recently, Carmi et al.~\cite{DBLP:journals/comgeo/CarmiCKKORRSS20}, and Hoffmann et al.~\cite{hoffmann2014halving} established improved bounds on the size of separators for some restricted classes of geometric graphs.

\subsection{Our Contribution}
We study the \textsc{Reachability} problem on three different graph families - intersection Graphs of Jordan Regions,
Unit Contact Disk Graphs (Penny Graphs) and Chordal Graphs.

First, in Section~\ref{jordon}, we show that given a collection of Jordan regions, there exists a polynomial-time algorithm that computes a separator of size $O(m^{1/2})$ using $O(m^{1/2}\log n)$ space. Then, by combining this with Theorem~\ref{treewidth}, we note that \textsc{Reachability} on directed intersection graphs of Jordan regions can be solved in polynomial time and $O(m^{1/2}\log n)$ space, where $n$ is the number of Jordan regions, and $m$ is the total number of crossings among the regions.

In Section~\ref{penny}, we present a space-efficient algorithm for \textsc{Reachability} on penny graphs that uses $O(n^{1/4 + \varepsilon})$ space and polynomial time. Since penny graphs are a subclass of planar graphs, \textsc{Reachability} can be solved for penny graphs in $O(n^{1/2}\log n)$ space.
However, to reduce the space complexity, we use an involved technique.
First, we use the axis-parallel separator of Carmi et al.~\cite{DBLP:journals/comgeo/CarmiCKKORRSS20} repeatedly to form rectangular subdivision such that each cell of a subdivision contains a bounded size subgraph of the input graph. Then using these subdivisions, we construct a smaller auxiliary graph that preserves \textsc{Reachability} information. Finally, using a notion of \emph{pseduo-separator}, we solve \textsc{Reachability} in this auxiliary graph in a space-efficient manner.
Note that, there exists a a $O(n^{1/4 + \varepsilon})$ space and polynomial-time algorithm for reachability in grid graphs~\cite{DBLP:conf/fsttcs/JainT19}. Since grid graphs are a subclass of penny graphs, our result is a generalization and improvement of previously known results.

Finally, in Section~\ref{chordal}, we adopt the algorithm of~\cite{gilbert1984separator} for \textsc{Reachability} on chordal graphs to provide a space-efficient and polynomial-time algorithm.

Even though chordal graphs seem to be an exception in the context of geometric graph families, we note that there exists work representing chordal graphs as a subfamily of string graphs~\cite{chaplick2012bend}.
Moreover, it is known that chordal graphs can be characterized as intersection graphs of sub-trees of a tree~\cite{gavril1974intersection},
Also, Interval graphs are a subfamily of chordal graphs.

\subsection{Preliminaries.}
Throughout the text, we denote the set $\{1,2,\ldots,n\}$ as $[n]$. For a graph $G=(V,E)$ and a subset $U \subseteq V$, $G[U]$ denotes the subgraph induced on $U$.
Given a graph $G=(V,E)$ with vertex set $V$ and edge set $E$, a weight function $w:V\rightarrow R_{\ge 0}$ is a non-negative function on $V$ such that the sum of the weights is $1$.
For any subset $S\subset V$, the weight $w(S)$ is defined to be $\sum_{v\in S}w(v)$. A separator in a graph $G$ with respect to a weight function $w$ is a subset $S\subset V$ of vertices such
that there is a partition $V=S\cup V_1 \cup V_2$
such that $w(V_1), w(V_2)\le 2/3$ and there are no edges between $V_1$ and $V_2$. If the weight function is not specified, it is assumed that $w(v) = \frac{1}{|V|}$, for every vertex $v$.
We refer to Arora and Barak~\cite{arora2009computational} for a basic understanding of the model and terminologies for space-efficient algorithms.

\section{Intersection Graphs of Jordan Regions}\label{jordon}

In this section, we study the \textsc{Reachability} problem on
the intersection graphs of Jordan regions. Let $\mathcal{C}$ be a set such that each element $C$ of $\mathcal C$ is a simply connected compact region in a plane bounded by a closed Jordan curve. Let $G(\mathcal{C})$ be an intersection graph on $\mathcal{C}$, where two distinct elements $C_1,C_2\in \mathcal{C}$ are adjacent if and only if their intersection is not empty. Additionally,
each edge of $G(\mathcal{C})$ is a directed edge. A Jordan region $A$ \emph{contains} another Jordan region $B$ if $A \subseteq int(B)$, where $int(B)$ is the interior of $B$.
We assume that no point is a boundary point of three elements of $C$, and each element of $C$ intersects at least one other element of $C$. Further, we may assume that the number of such intersection points is finite.
In~\cite{fox2008separator}, Fox and Pach showed the existence of a $O(m^{1/2})$-size separator on a collection of Jordan regions, where $m$ is the total number of crossings points in the boundary of Jordan regions.

For computation, we work with those classes of Jordan regions that can be represented compactly, i.e., the set of $n$ Jordan regions can be represented by poly$(n)$ bits. Furthermore, we assume that basic operations, such as determining the intersection point of two Jordan regions and outputting a constant number of points of any Jordan region, can be performed in log-space.
We prove the following theorem about the separator. Note that any \emph{vertex separator} of an intersection graph does not rely on the direction of the edges. Hence for ease of explanation, we drop the directions from the input graph $G(\mathcal{C})$.

\begin{theorem}\label{jordan-sep-th}
Let $\mathcal{C}$ be a collection of Jordan regions, and let $G(\mathcal{C})$ be an intersection graph on $\mathcal{C}$. Let $w$ be a weight function on $\mathcal{C}$. There exists a polynomial time algorithm that takes as an input the set $\mathcal{C}$ and outputs a separator of $G(\mathcal{C})$ of size $O(m^{1/2})$ using $O(m^{1/2}\log n)$ space.
\end{theorem}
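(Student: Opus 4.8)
The plan is to make the Fox--Pach separator argument \cite{fox2008separator} constructive and space-efficient. Recall the structure of their proof: from the planar arrangement of the $m$ crossing points among the Jordan curves one builds an auxiliary planar graph $H$ (roughly, the arrangement graph whose vertices are the crossing points and whose faces/regions encode incidences with the Jordan regions), applies the weighted planar separator theorem to $H$ to get a set of $O(m^{1/2})$ arrangement vertices/edges, and then argues that the $O(m^{1/2})$ Jordan regions whose boundaries pass through this set form a separator of $G(\mathcal{C})$ with respect to $w$. So the proof has two ingredients: (i) a space-efficient planar separator routine, and (ii) a space-efficient way to construct $H$ (or query it on the fly) and to translate the planar separator back into a vertex separator of $G(\mathcal{C})$.

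For (i), I would invoke the known result that a weighted $\Tilde{O}(n^{1/2})$-size planar separator can be computed in polynomial time and $\Tilde{O}(n^{1/2})$ space \cite{asano2014widetilde,DBLP:conf/coco/ImaiNPVW13}; applied to the arrangement graph $H$, which has $O(m)$ vertices (each crossing point, plus $O(n)$ "one per region" marker vertices, all $O(m)$ since every region has at least one crossing), this yields a separator of $H$ of size $O(m^{1/2})$ using $O(m^{1/2}\log m)=O(m^{1/2}\log n)$ space. The subtlety is that these planar-separator algorithms expect their input graph presented in a standard way; here $H$ is only given implicitly through $\mathcal{C}$. I would therefore implement a log-space oracle for $H$: given two crossing points (each named by the pair of region indices whose boundaries cross there, which fits in $O(\log n)$ bits), decide adjacency in $H$ by walking along the relevant Jordan boundary using the assumed log-space primitives (compute intersection points, list constantly many boundary points, decide cyclic order along a boundary). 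Piping this oracle into the planar-separator algorithm costs only an $O(\log n)$-space overhead per query, so the overall bound is preserved.

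For (ii), once the $O(m^{1/2})$-element separator $S_H$ of $H$ is in hand (stored explicitly in $O(m^{1/2}\log n)$ space), the output separator of $G(\mathcal{C})$ is $S=\{\,C\in\mathcal{C}: \partial C \text{ meets } S_H\,\}$, which has size $O(m^{1/2})$ since each crossing point lies on exactly two boundaries. To certify that $S$ is a separator with respect to $w$, I follow Fox--Pach: the two "sides" $V_1,V_2$ are obtained by assigning each region $C\notin S$ to the side of $H\setminus S_H$ in which (a canonical point of) $\partial C$, or the face of the arrangement containing $C$, lies; the planar separator's weight guarantee, with the weight of an arrangement vertex/face set to the $w$-weight of the regions charged to it, transfers to $w(V_1),w(V_2)\le 2/3$. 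The no-edge condition holds because if $C\in V_1$ and $C'\in V_2$ intersected, a crossing point or a nesting witness of $C,C'$ would give a path in $H$ between the two sides avoiding $S_H$, a contradiction; the nesting (containment without crossing) case is handled exactly as in \cite{fox2008separator} by adding the $O(n)$ marker vertices mentioned above. Partitioning $\mathcal{C}$ into $V_1,V_2$ is done by re-running the connectivity computation of $H\setminus S_H$ for each region, in polynomial time and $O(m^{1/2}\log n)$ space.

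The main obstacle I anticipate is purely the bookkeeping of step (ii): making the reduction to a planar separator genuinely black-box. The cleanest route, which I would adopt, is \emph{not} to re-derive Fox--Pach but to reduce directly to the planar case --- contract each Jordan region to a point to obtain a planar (multi)graph $G'$ on $n$ vertices with $O(m)$ edges whose planar separator, of size $O(m^{1/2})$, is already a vertex separator of $G(\mathcal{C})$ with the same weight function. Then step (i) alone suffices, and the only thing to check space-efficiently is that $G'$ is planar and that its combinatorial embedding can be queried in log-space from $\mathcal{C}$, after which the $\Tilde{O}(\sqrt{n'})=\Tilde{O}(\sqrt m)$ planar-separator algorithm of \cite{asano2014widetilde} finishes the job. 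The handling of containments (which create edges in $G(\mathcal{C})$ with no crossing) is the one place where care is needed, and it is dispatched by the standard trick of routing a "virtual" edge through the nested arrangement, keeping $G'$ planar.
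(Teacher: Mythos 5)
Your first route is essentially the paper's proof: build a planar ``arrangement'' graph on the crossing points (plus a constant number of extra points per region), triangulate it in log-space, run the $O(\sqrt{n}\log n)$-space planar separator of Imai et al.\ on the triangulation with a suitably transferred weight function, and output the regions whose boundaries meet the returned separator. The one ingredient you gloss over is that before building the arrangement graph the paper discards the regions involved in at least $\frac{1}{3}m^{1/2}$ containments, and the regions of weight more than $m^{-1/2}$; these $O(m^{1/2})$ regions are dumped into the separator for free. Their exclusion is not bookkeeping: the light-region condition is what keeps any single arrangement vertex from carrying too much weight (so the weighted planar separator gives a balanced partition), and the containment condition is precisely what makes the ``no edge between $V_1$ and $V_2$'' argument work in the nesting case --- if a surviving region of $V_1$ met a region of $V_2$ with no boundary crossing, it would have to contain \emph{all} of $V_2$, i.e.\ at least $\frac{1}{3}m^{1/2}$ regions, in its interior, contradicting that it survived the pruning. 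Your ``route a virtual edge through the nested arrangement'' sketch does not obviously substitute for this.

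The genuine gap is in the route you say you would actually adopt. Contracting each Jordan region to a point does \emph{not} produce a planar multigraph: intersection graphs of Jordan regions contain arbitrarily large cliques (take $n$ pairwise-crossing regions, or $n$ regions sharing an interior point; the general-position assumption only forbids three \emph{boundaries} through one point, not three overlapping interiors). So $G'$ can be $K_n$, the planar separator theorem simply does not apply to it, and no rerouting of edges can restore planarity. The entire point of the Fox--Pach construction --- and the reason the bound is $O(m^{1/2})$ in the number of crossings rather than anything in terms of $n$ --- is that planarity lives in the arrangement of the boundary curves, not in the intersection graph itself. You should commit to your first route, add the pruning of heavy and containment-rich regions, and then the argument coincides with the paper's.
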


\begin{figure}
\centering
\includegraphics{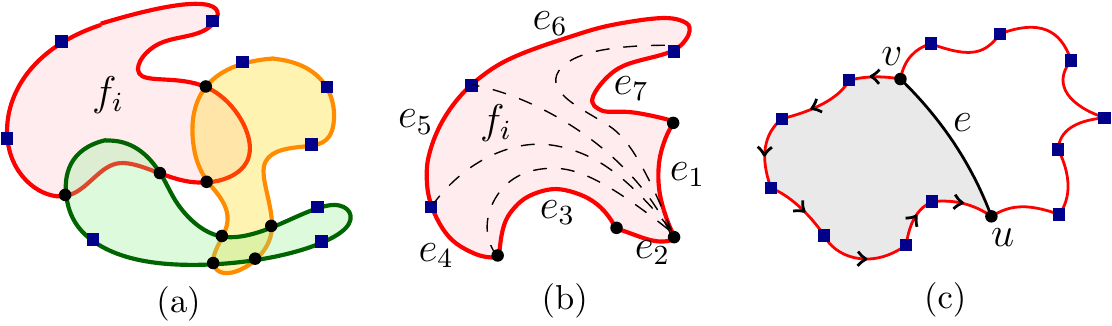}
\caption{(a) An illustration of the planar embedding of the curves in $\mathcal{C}$. The black disk points are the intersection points from set $A(\mathcal{C})$, and the blue square points the three extra points from set $B(\mathcal{C})$. (b)
A illustration of a face in the embedding and its corresponding planar triangulation. (c) Traversal of the left-face of the edge $e=(u,v)$.}
\label{jordan-fig}
\end{figure}

We assume that the sum of the weights of the given Jordan regions is one.
Let $H(\mathcal{C})$

be the set of \emph{heavy} regions in $\mathcal{C}$ whose weight is more than $1/m^{1/2}$, and let $L(\mathcal{C})$ be set of regions in $\mathcal{C}$
that are involved\footnote{We say that a region $C_1$ is involved in a \emph{containment} with a region $C_2$ if either $C_1$ contains $C_2$ or $C_2$ contains $C_1$.} in at least $\frac{1}{3}m^{1/2}$ containments with other elements of $\mathcal{C}$.
Let $I(\mathcal{C}) = \mathcal{C} \setminus (H(\mathcal{C}) \cup L(\mathcal{C}))$.
We define a planar graph $G_{P}(\mathcal{C})$.
The vertex set of this graph is the union of two subsets, i.e., $A(\mathcal{C}) \cup B(\mathcal{C})$, where $A(\mathcal{C})$ is the set of all intersection points that lie on the boundary of at least one element of $I(\mathcal{C})$ and $B(\mathcal{C})$ is a collection of points not in $A(\mathcal{C})$ such that the boundary of each $C \in I(\mathcal{C})$ contains precisely three points in $B(\mathcal{C})$. There exists an edge between two vertices of $G_P(\mathcal{C})$ if and only if there are consecutive points along the boundary of an element of $I(\mathcal{C})$; see Figure~\ref{jordan-fig}(a).

To see that the defined graph is planar, we see that we have added a vertex at every intersection point of the boundary of the given Jordan curves. We can now draw the edges along the boundary. We proceed with the following lemmas.

\begin{lemma}\label{jordan-lemma-1}
There exists a log-space algorithm that takes as an input a set of Jordan regions $\mathcal{C}$ and outputs a graph $G_P(\mathcal{C})$.
\end{lemma}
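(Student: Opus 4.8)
The plan is to describe a log-space algorithm that, given the compact representation of $\mathcal{C}$, produces the vertex set $A(\mathcal{C}) \cup B(\mathcal{C})$ and the edge set of $G_P(\mathcal{C})$ by streaming over the input, relying on the assumed log-space primitives (computing intersection points of two Jordan regions, outputting a constant number of boundary points). First I would show how to recognize membership in $I(\mathcal{C})$ in log-space: a region $C$ belongs to $I(\mathcal{C})$ iff its weight is at most $1/m^{1/2}$ (checkable directly, since weights are part of the input and $m$ can be recomputed by counting intersection points with a single pass) and $C$ is involved in fewer than $\tfrac{1}{3}m^{1/2}$ containments; the latter is tested by iterating over all other regions $C'$, using the log-space containment test (itself reducible to the intersection-point and point-location primitives), and maintaining a counter of size $O(\log n)$. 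Thus "is $C \in I(\mathcal{C})$?" is a log-space predicate, which is all we need since the output-producing machine may re-run it as a subroutine.

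Next I would generate the vertices. For $A(\mathcal{C})$: iterate over all unordered pairs $\{C, C'\}$ of regions; for each pair compute (in log-space) the finite set of boundary intersection points; for each such point $p$, check whether it lies on the boundary of at least one region of $I(\mathcal{C})$, and if so output $p$. To avoid duplicates we fix a canonical encoding of each intersection point (e.g.\ the lexicographically least pair of indices realizing it together with an index into the returned point list) and output $p$ only when the current pair is its canonical witness. For $B(\mathcal{C})$: iterate over all regions $C$; if $C \in I(\mathcal{C})$, use the primitive that outputs a constant number of boundary points to select three points on $\partial C$ that are not intersection points — e.g.\ pick them from a fixed set of candidate points returned by the primitive and discard any that coincide with an intersection point of $C$ with another region (again checkable in log-space); output these three, tagged by $C$, so they are automatically distinct across regions and from $A(\mathcal{C})$. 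Each phase uses only a constant number of $O(\log n)$-bit counters and the log-space subroutines, so the whole vertex-generation runs in log-space.

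For the edges: by definition $G_P(\mathcal{C})$ has an edge between two vertices exactly when they are consecutive along the boundary of some $C \in I(\mathcal{C})$. So I would iterate over all $C \in I(\mathcal{C})$, collect the set $S_C$ of points on $\partial C$ that become vertices (the $A(\mathcal{C})$-points on $\partial C$, found as above, plus the three $B(\mathcal{C})$-points assigned to $C$), and then determine adjacency along the closed curve $\partial C$: two points $p, q \in S_C$ are consecutive iff no third point of $S_C$ lies on the arc of $\partial C$ strictly between them. Consecutiveness along a Jordan curve can be decided with the point-location / arc-ordering primitive in log-space by scanning all of $S_C$ for each candidate pair, using only counters and pointers of logarithmic size. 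We output each such edge once, canonically attributed (say to the region $C$ of least index on whose boundary the pair is consecutive), giving exactly the edge set of $G_P(\mathcal{C})$.

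The main obstacle I anticipate is entirely in the bookkeeping rather than in any deep idea: ensuring that every vertex and every edge is output exactly once despite the streaming, no-auxiliary-storage regime, which forces the canonical-witness convention above; and carefully justifying that each geometric test (containment, "point lies on a boundary", "points are consecutive along $\partial C$") really is reducible to the stated log-space primitives on the compact representation, so that composing a constant number of them and $O(1)$ logarithmic counters stays within log-space. There is also the mild subtlety that we need $\partial C$ to contain at least three candidate non-intersection points for every $C \in I(\mathcal{C})$; this follows from the standing assumption that the number of intersection points is finite together with the assumed primitive that can output a constant number of distinct boundary points of any region, so the three points of $B(\mathcal{C})$ can always be chosen. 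Planarity of the resulting graph is already argued in the text (vertices at all boundary intersection points, edges drawn along the curves), so nothing further is needed there.
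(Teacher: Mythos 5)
Your proposal is correct and follows essentially the same route as the paper: the key step in both is showing that membership in $I(\mathcal{C})$ is a log-space predicate (weights are given in the input; containments are counted by iterating over all other regions), after which the vertex and edge generation reduces to the assumed log-space geometric primitives. The paper leaves the vertex/edge output and duplicate-avoidance bookkeeping implicit, whereas you spell them out; this is added detail, not a different argument.
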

\begin{proof}
First of all, note that there exists a log-space subroutine that on an input Jordan region $C$ of $\mathcal{C}$ determines if $C$ is in $L(\mathcal{C})$. To do this, the subroutine iterates over all Jordan regions in $\mathcal{C}$ and calculates the total number of containment with which $C$ is involved.
The weight of each Jordan region comes with the input. Hence it is enough for us to determine if a Jordan region $C$ of $\mathcal{C}$ belongs to the set $I(\mathcal{C})$. In order to construct the planar graph, we essentially need to know $I(\mathcal{C})$, thereby concluding the proof of the lemma.
\end{proof}

Next, we triangulate $G_{P}(\mathcal{C})$, and denote the triangulated planar graph by $G_{T}(\mathcal{C})$. This triangulation can be obtained in log-space using the following lemma

\begin{lemma}\label{jordan-triangle}
There exists an algorithm that takes a planar graph in the input and returns a triangulated planar graph in log-space.
\end{lemma}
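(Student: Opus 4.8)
The plan is to describe an explicit log-space triangulation procedure for an arbitrary planar graph $G$ given in the input. The key observation is that triangulation is a local, combinatorial task once we have access to the rotation system (the cyclic order of edges around each vertex), and for the graphs $G_P(\mathcal{C})$ arising here this rotation system is implicit in the geometry — the edges around a point are ordered by the angles of the boundary arcs leaving that point, which we can compute in log-space by assumption. More generally one may assume the planar graph comes with a combinatorial embedding, or invoke the fact that a planar embedding can be computed in log-space (Allender--Mahajan, or Reingold's result on undirected reachability together with the standard log-space planarity machinery). So first I would fix, for each vertex $v$, the cyclic list of its neighbours in the embedding; this list is enumerable in log-space by iterating over candidate neighbours and comparing angles, without ever storing the whole list.

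Next I would add edges to make every face a triangle, proceeding face by face. To enumerate the faces space-efficiently, I use the standard trick: a face is traced by repeatedly applying ``next edge in clockwise rotation'' starting from a directed edge $(u,v)$ (this is the left-face traversal illustrated in Figure~\ref{jordan-fig}(c)); each face is canonically identified with the lexicographically smallest directed edge on its boundary, so we can decide in log-space whether a given directed edge is the canonical representative of its face, and thus enumerate faces without duplication. For each face $F$ with canonical first vertex $v_0$ and boundary walk $v_0, v_1, \dots, v_{k-1}$, I add the ``fan'' edges $\{v_0, v_i\}$ for all $2 \le i \le k-2$. Because we only ever need the current vertex and a counter along the walk, this uses $O(\log n)$ space. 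The triangulated graph $G_T(\mathcal{C})$ is then represented implicitly: its edge set is ``original edge, or fan edge of some face'', and membership in this set is testable in log-space by running the face-tracing subroutine.

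The main obstacle — and the only point that needs care — is that naively adding a fan of chords from $v_0$ can create parallel edges or self-loops when the face is not a simple cycle (a single vertex or edge may appear several times on the boundary walk of a face in a non-2-connected planar graph, and $v_0$ itself may repeat). The clean fix is a two-phase approach: first make $G$ connected and then 2-connected by adding $O(n)$ edges (connecting components and bridging cut vertices can be done in log-space using reachability queries, which is log-space by Reingold's theorem), so that every face boundary is a simple cycle; only then run the fan triangulation. Since adding edges to a planar graph while keeping it planar and refining the embedding is again a log-space-checkable local operation, composing these phases keeps the whole procedure in log-space. Finally I would remark that $G_T(\mathcal{C})$ has the same vertex set as $G_P(\mathcal{C})$ and only $O(|V(G_P(\mathcal{C}))|)$ edges (a simple planar graph on $N$ vertices has at most $3N-6$ edges), which is the property needed downstream when we invoke a planar separator theorem on $G_T(\mathcal{C})$.
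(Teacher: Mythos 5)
Your proposal follows essentially the same route as the paper's proof: compute a planar embedding in log-space, trace each face by repeatedly taking the clockwise-next edge, pick a canonical vertex per face (you use the lexicographically smallest directed edge, the paper uses the lowest-indexed vertex), and add a fan of chords from it, with edge membership testable on the fly. The one substantive addition is your explicit treatment of non-simple face boundaries via a preliminary log-space 2-connectivity augmentation --- a degenerate case the paper's proof silently glosses over --- so your write-up is, if anything, more careful on that point.
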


\begin{proof}
Consider a planar graph $G_P$. The embedding of a planar graph can be computed in log-space~\cite{elberfeld2014embedding}. Note that an edge $e_1=(u_1,v_1)$ in
$G_P$ is part of two faces in the planar embedding of the graph. We call them the left and the right face, respectively. We explain how to traverse the left face of $e_1$ in log-space. In order to do this, we start by traversing $e_1$ to one of its endpoint (say $v_1$) and take the edge clockwise next to $e_1$, that is incident on $v_1$. Let $e_2$ be such an edge; see Figure~\ref{jordan-fig}.
We continue the traversal along the edge $e_3$, which is the edge clockwise next to $e_2$ from the endpoint $v_2$ of $e_2$, and so on. In general, if we reach at the vertex $v_i$ using the edge $e_i$, we continue along the edge clockwise next to $e_i$ from $v_i$. Clearly, by following this procedure, we can traverse the boundary of the face. In order to prove this lemma, it is sufficient to show that given two vertices $u_l$ and $v_l$ of the input planar graph $G_P$, whether an edge $e_l$ can be added between them in log-space as part of the triangulation.
We assume that the vertices of the input graph $G_P$ are indexed by an integer from $1$ to $k$, for some $k\in [n]$. For each edge $e$ that is incident with $u$, we first traverse the left face of $e$ and see if (i) $v$ is present in that face, (ii) either $u$ or $v$ is the lowest-indexed vertex of that face. There is a triangulated edge between $u$ and $v$ if and only if both these conditions hold for any such edge $e$; see Figure~\ref{jordan-fig}(b). This concludes the proof of the lemma.
\end{proof}

Let $d(C)$ be the number of points on the boundary of $C$ that belong to the vertex set of $G_{T}(\mathcal{C})$. For a vertex $v$ in $G_{T}(\mathcal{C})$, we define a new weight function $\weight{v}$ as follows:

\noindent \textbf{Case 1:} $v$ is part of two boundaries $C_1$ and $C_2$ of $\mathcal{C}$, $\weight{v} = \frac{w(C_1)}{d(C_1)} + \frac{w(C_2)}{d(C_2)}$.

\noindent \textbf{Case 2:} $v$ is in the boundary of only one element $C_1$ of $\mathcal{C}$, $\weight{v} = \frac{w(C_1)}{d(C_1)}$.

Fox and Pach~\cite{fox2008separator} used the idea to find a cycle-separator in this triangulated graph $G_{T}(\mathcal{C})$ and used it to construct a separator of the original geometric intersection graph. Instead, we will use the result of Imai et al.~\cite{DBLP:conf/coco/ImaiNPVW13} to obtain such a separator in a space-efficient manner.

\begin{lemma}[\cite{DBLP:conf/coco/ImaiNPVW13}]
Let $G$ be a triangulated planar graph. There exists a polynomial time algorithm that uses $O(\sqrt{n}\log n)$ space to output a separator of $G$ of size $O(\sqrt{n})$.
\end{lemma}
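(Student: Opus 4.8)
The plan is to implement the classical Lipton--Tarjan separator construction and to replace each of its linear-space subroutines by a polynomial-time, $O(\sqrt{n}\log n)$-space counterpart. Since $G$ is a triangulated planar graph, I would first compute a combinatorial planar embedding (the rotation system at every vertex) in logspace via~\cite{elberfeld2014embedding}, exactly as in the proof of Lemma~\ref{jordan-triangle}. All later steps work against this embedding, so that faces and the clockwise-next edge around any vertex are accessible without any extra storage, and the notions of ``interior'' and ``exterior'' of a separating curve are made explicit.

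Next I would set up a breadth-first layering from an arbitrary root $r$: let $L_i = \{ v : \dist{r}{v} = i \}$. The classical argument shows that either a single layer $L_i$ of size $O(\sqrt{n})$ already balances the two sides, or one can pick two layers $L_a, L_b$ of size $O(\sqrt{n})$ (by a pigeonhole argument over the $\sqrt{n}$ layers on either side of the median layer) so that the part below $L_a$ and the part above $L_b$ each have weight at most $2/3$, while the ``middle'' band $L_{a+1},\dots,L_{b-1}$ may be heavy. In the latter case I would contract all layers $\le a$ into $r$ and delete all layers $> b$; the resulting planar graph has a BFS spanning tree $T$ of height at most $b-a-1 = O(\sqrt{n})$. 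On this bounded-height instance I would run the fundamental-cycle step: every non-tree edge $e=(u,v)$ closes a cycle consisting of the two tree paths from $u$ and $v$ to their lowest common ancestor, of length at most $2\cdot\mathrm{height}(T)+1 = O(\sqrt{n})$, and in a triangulated (hence $2$-connected) planar graph some such cycle balances interior and exterior weights to at most $2/3$ each. The final separator is $L_a \cup L_b$ together with the vertices of this cycle, all of total size $O(\sqrt{n})$. The height bound is what makes this cheap: given $T$, locating ancestors, the LCA, and hence the whole fundamental cycle is pure tree navigation in $O(\log n)$ space, and the cycle has only $O(\sqrt{n})$ vertices.

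To decide which of the $O(\sqrt{n})$ candidate layers, or which candidate fundamental cycle, actually balances $G$, I need to count, for a fixed candidate separator $S$, the weight on each side after deleting $S$. I would do this with space-efficient connectivity: delete $S$ and invoke Reingold's logspace undirected-reachability routine, iterate over all vertices assigning each to its side, and accumulate the two weights in $O(\log n)$ space (for the two-layer and cycle cases the two sides are interior versus exterior of the separating curve, which the embedding identifies directly). Iterating this balance test over all $O(\sqrt{n})$ candidates keeps the total space at $O(\sqrt{n}\log n)$ and the running time polynomial.

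The main obstacle is the breadth-first layering itself: unlike connectivity, computing exact distances $\dist{r}{v}$ in a planar graph is not known to be in logspace, and a naive BFS needs $\Omega(n)$ space for the frontier. This is precisely the ingredient that must be supplied space-efficiently, and it is where the recursion lives: the layering is bootstrapped by applying the separator construction to distance-bounded subregions, so that each distance query is answered within $O(\sqrt{n}\log n)$ space at the cost of repeated recomputation (giving polynomial rather than linear time). Once the layers are available as an oracle, the reduction to a bounded-height tree and the fundamental-cycle step are the routine parts; the delicate point is interleaving layering, contraction, and balance-testing so that the whole pipeline stays simultaneously within $O(\sqrt{n}\log n)$ space, which is the crux of the argument.
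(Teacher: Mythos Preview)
The paper does not prove this lemma; it is stated with a citation to Imai et al.~\cite{DBLP:conf/coco/ImaiNPVW13} and used as a black box, so there is no in-paper argument to compare your proposal against.

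Taken on its own, your sketch correctly reproduces the Lipton--Tarjan skeleton and correctly isolates the only non-routine ingredient, namely obtaining the BFS layering in $O(\sqrt{n}\log n)$ space and polynomial time. Everything downstream of that --- selecting two sparse layers by pigeonhole, contracting to a bounded-height BFS tree, enumerating fundamental cycles of length $O(\sqrt n)$, and testing balance with Reingold's algorithm --- is sound within the stated bounds. Your last paragraph, however, only names the obstacle and gestures at ``bootstrapping the layering via separators on distance-bounded subregions'' without spelling out how the circularity is broken or why such a recursion stays within the space budget; as written that passage is a restatement of the difficulty rather than a resolution, and it is precisely this step that constitutes the technical content of the cited result. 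If your intent is merely to invoke the lemma, a citation suffices (which is what the paper does); if your intent is to actually prove it, the BFS mechanism must be made concrete.
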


Now, consider the set of regions of $\mathcal{C}$ whose boundary contains at least one of the points of the separator returned by the algorithm of Imai et al. on $G_T(\mathcal{C})$.
We denote these regions by $\sep{\mathcal{C}}$.
In the following, we show that this set is indeed the required separator. Moreover, this set can be calculated within the required space-time bounds.

\begin{lemma}\label{jordan-sep}
The set $\sep{\mathcal{C}}$ is a separator of the intersection graph of $\mathcal{C}$
\end{lemma}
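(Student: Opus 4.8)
The plan is to lift the vertex separator of the triangulated planar graph $G_T(\mathcal{C})$ returned by the algorithm of Imai et al.\ up to a separator of the intersection graph $G(\mathcal{C})$, after enlarging it by the heavy regions $H(\mathcal{C})$ and the many-times-nested regions $L(\mathcal{C})$. Concretely, I would first run that algorithm on $G_T(\mathcal{C})$ with respect to the weight function $\weight{\cdot}$, obtaining a set $S_T\subseteq V(G_T(\mathcal{C}))$ together with a partition $V(G_T(\mathcal{C}))=S_T\cup P_1\cup P_2$ such that $\weight{P_1},\weight{P_2}\le 2/3$ and no edge of $G_T(\mathcal{C})$ joins $P_1$ to $P_2$. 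By definition $\sep{\mathcal{C}}$ is the set of regions whose boundary meets $S_T$, and the object I actually analyze is $S:=\sep{\mathcal{C}}\cup H(\mathcal{C})\cup L(\mathcal{C})$; the heavy and high-containment regions are put in so that the statement really holds, and the subsequent lemma accounts for all three parts in the $O(m^{1/2})$ size bound. Every region outside $S$ lies in $I(\mathcal{C})$, hence its boundary carries the three points of $B(\mathcal{C})$ and no point of $S_T$.

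Next I would transfer the partition to the regions. For $C\notin S$, list the vertices of $G_T(\mathcal{C})$ on $\partial C$ in cyclic order: consecutive ones are joined by an edge of $G_P(\mathcal{C})$, hence of $G_T(\mathcal{C})$, so they form a closed walk $\gamma_C$ none of whose vertices is in $S_T$; since $G_T(\mathcal{C})$ has no edge between $P_1$ and $P_2$, all vertices of $\gamma_C$ lie in a single $P_i$, and I assign $C$ to $V_i$. This is well defined because $\gamma_C$ is nonempty, and $\mathcal{C}=S\cup V_1\cup V_2$. For the balance condition, note that each $C\in V_i$ has all $d(C)$ of its $G_T(\mathcal{C})$-vertices in $P_i$ and, by the definition of $\weight{\cdot}$, contributes $w(C)/d(C)$ to each of them; summing over those vertices, $C$ contributes exactly $w(C)$ to $\weight{P_i}$, so $w(V_i)=\sum_{C\in V_i}w(C)\le\weight{P_i}\le 2/3$.

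It then remains to rule out an edge of $G(\mathcal{C})$ between $V_1$ and $V_2$. Suppose $C\in V_1$, $C'\in V_2$ and $C\cap C'\ne\emptyset$; since $C,C'\in I(\mathcal{C})$, neither is heavy and each is involved in fewer than $\frac{1}{3}m^{1/2}$ containments. If $\partial C$ and $\partial C'$ cross, a crossing point $p$ lies on the boundaries of two regions of $I(\mathcal{C})$, hence $p\in A(\mathcal{C})$ is a vertex of $G_T(\mathcal{C})$, and $p\notin S_T$ because $C\notin\sep{\mathcal{C}}$; but $p$ lies on both $\gamma_C$ and $\gamma_{C'}$, so $p\in P_1\cap P_2$, a contradiction. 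Otherwise $\partial C\cap\partial C'=\emptyset$, so one region contains the other, say $C'\subseteq int(C)$, and here I would argue topologically: in the geometric embedding $\gamma_C$ is a drawing of the Jordan curve $\partial C$ and bounds the disk $C$, so $\gamma_{C'}$ lies strictly inside $\gamma_C$, and any region avoiding $S_T$ with a boundary point in $int(C)$ has its whole boundary inside $\gamma_C$ and is therefore contained in $C$ --- of which, since $C\notin L(\mathcal{C})$, there are fewer than $\frac{1}{3}m^{1/2}$. Since $G_T(\mathcal{C})\setminus S_T$ has no edge crossing $\gamma_C\subseteq P_1$, the connected component of $G_T(\mathcal{C})\setminus S_T$ containing $\gamma_{C'}$ stays inside $int(C)$, and I would reach a contradiction by pitting this trapping against the balance and bounded size of $S_T$ and the bound on the number of regions contained in $C$.

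I expect this last, containment, case to be the crux of the proof. An adjacency created by one region nesting inside another records no crossing point, so it is invisible to $G_T(\mathcal{C})$, and the argument cannot stay within the combinatorics of $G_T(\mathcal{C})$: it has to combine planarity of the embedding with the precise thresholds $1/m^{1/2}$ and $\frac{1}{3}m^{1/2}$ in the definitions of $H(\mathcal{C})$ and $L(\mathcal{C})$. By contrast, the crossing-edge case, the weight accounting, and the translation of the Imai et al.\ partition are routine consequences of the planar separator.
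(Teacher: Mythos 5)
Your overall route is the same as the paper's: lift the Imai et al.\ separator of $G_T(\mathcal{C})$ to the regions, absorb $H(\mathcal{C})\cup L(\mathcal{C})$ into the separating set (the paper does this too, via its set $V_0$, even though the lemma statement names only $\sep{\mathcal{C}}$), and split the bad-edge analysis into a crossing case and a containment case. Your treatment of the partition, the weight accounting (each $C\in V_i$ contributes exactly $w(C)$ to $\weight{P_i}$, hence $w(V_i)\le 2/3$), and the crossing case (a crossing point would be a vertex of $A(\mathcal{C})$ lying in both $P_1$ and $P_2$) is correct and in fact more explicit than the paper's.

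However, the containment case --- which you yourself flag as the crux --- is left as a statement of intent rather than a proof, and this is a genuine gap. The missing argument, which is exactly what the paper supplies, is a counting step: the separator of the triangulated graph is a closed Jordan curve, and since the boundary of the containing region does not meet it, that region must contain in its interior an entire side of the curve, hence \emph{every} region of the opposite class $V_j$ (not merely the one component of $G_T(\mathcal{C})\setminus S_T$ holding $\gamma_{C'}$, which is all you assert). One then disposes of the case $w(V_0)\ge 1/3$ (trivial separator), takes the heavier side so that $w(V_j)>1/3$, and uses that no region of $I(\mathcal{C})$ is heavy, i.e.\ each has weight less than $1/m^{1/2}$, to conclude $|V_j|>\frac{1}{3}m^{1/2}$; the containing region is therefore involved in more than $\frac{1}{3}m^{1/2}$ containments and would lie in $L(\mathcal{C})$, contradicting its membership in $I(\mathcal{C})$. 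Your sketch records the upper bound ($C\notin L(\mathcal{C})$ implies fewer than $\frac{1}{3}m^{1/2}$ regions inside $C$) but never produces the matching lower bound, and the "pitting the trapping against the balance" step is precisely where the proof lives. A secondary point to be careful about: in the plane only the region whose boundary lies on the unbounded side of the separator cycle can contain the bounded side, so the direction of containment is forced and must be reconciled with the choice of which side is the heavier one.
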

\begin{proof}
First of all, note $G_{T}(\mathcal{C})$ is a triangulated planar graph. Let $S$ be the separator of this graph returned by the algorithm of Imai et al.~\cite{DBLP:conf/coco/ImaiNPVW13}. For a triangulated graph, we observe that the separator $S$ is a cycle. Let $V_0$ be the set of all elements of $\mathcal{C}$ that are either not in $I(\mathcal{C})$, or whose boundary contains a vertex of $S$. Let $K_1$ and $K_2$ be the set of vertices that are \emph{inside} and
\emph{outside} the cycle $S$, respectively.
Let $V_i$ (for $i\in\{1,2\}$) be the set of elements of $I(\mathcal{C})$, such that all vertex of $V(G_{T}(\mathcal{C}))$ which are on its boundary belong to the component $K_i$. It is easy to see that $V_0$, $V_1$, and $V_2$ are pairwise disjoint sets, and their union is $\mathcal{C}$. If $V_0$ has a weight of at least $1/3$, we see that it acts as a trivial separator. Hence, for the rest of this proof, we assume that the weight of $V_0$ is less than $1/3$. Thus, we can also see that the weight of $V_i$ is at most $2/3$. It only remains to show that there is no edge between a vertex of $V_1$ and a vertex of $V_2$.

Let us assume, w.l.o.g., that the weight of $V_2$ is greater than the weight of $V_1$ and, therefore, greater than $1/3$. In order to show that there is no edge between a vertex of $V_1$ and a vertex of $V_2$, let us assume to the contrary that there is one such edge. Since $V_1$ and $V_2$ are on two different sides of a closed Jordan curve, there exists an element $v$ of $V_1$ that contains in its interior all the elements of $V_2$. The number of elements in $V_2$ is at least $\frac{1}{3}m^{1/2}$ contradicting the fact that $v$ belongs to $I(\mathcal{C})$.
\end{proof}

Next, in the final lemma, we argue about the space-complexity of the procedure mentioned above.

\begin{lemma}\label{jordan-space}
There exists a polynomial time algorithm that takes as an input $\mathcal{C}$ and outputs the set $\sep{\mathcal{C}}$ in $O(m^{1/2}\log n)$ space.
\end{lemma}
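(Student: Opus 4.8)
The plan is to trace through the construction of $\sep{\mathcal{C}}$ step by step and bound the space used at each stage, showing that none of them exceeds $O(m^{1/2}\log n)$. First I would recall that by Lemma~\ref{jordan-lemma-1} the graph $G_P(\mathcal{C})$ is constructible in log-space, and by Lemma~\ref{jordan-triangle} the triangulated graph $G_T(\mathcal{C})$ is also obtainable in log-space; crucially, neither of these is stored explicitly but rather accessed on the fly, so that whenever the separator algorithm of Imai et al.\ queries an edge or a neighborhood of $G_T(\mathcal{C})$, we recompute the answer in log-space. The key quantitative observation is that the number of vertices of $G_T(\mathcal{C})$ is $O(m)$: the set $A(\mathcal{C})$ consists of intersection points, of which there are at most $m$, and $B(\mathcal{C})$ adds only three points per region of $I(\mathcal{C})$, hence $O(n) = O(m)$ more (using that each region intersects at least one other, so $n = O(m)$). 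Therefore applying the Imai et al.\ algorithm (Lemma, the quoted one) to $G_T(\mathcal{C})$ uses $O(\sqrt{|V(G_T(\mathcal{C}))|}\log |V(G_T(\mathcal{C}))|) = O(m^{1/2}\log n)$ space and polynomial time, and produces a separator $S$ of size $O(m^{1/2})$.

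Next I would address the weight function. The algorithm of Imai et al.\ works with respect to a weight function, and we need to feed it $\weight{\cdot}$ as defined by Cases 1 and 2. The point is that $\weight{v}$ for a given vertex $v$ depends only on the weights $w(C_i)$ of the (at most two) regions whose boundary passes through $v$ and on the degree-like quantities $d(C_i)$; since $w$ comes with the input and $d(C_i)$ can be computed in log-space by iterating over all boundary points of $C_i$ and counting those that are vertices of $G_T(\mathcal{C})$, we can supply $\weight{v}$ as an oracle in log-space. (Strictly one should observe that the weights $\weight{\cdot}$ sum to at most $1$ and distribute the weight of each region over its boundary vertices, so a balanced separator of $G_T(\mathcal{C})$ under $\weight{\cdot}$ translates into a balanced partition of $\mathcal{C}$ under $w$ after adding in the non-$I(\mathcal{C})$ regions and the boundary-hitting regions to $V_0$, exactly as Lemma~\ref{jordan-sep} sets up.) Here one should note that the heavy regions $H(\mathcal{C})$ are handled by the exclusion from $I(\mathcal{C})$: there are at most $m^{1/2}$ heavy regions since the total weight is $1$, so $|H(\mathcal{C})| = O(m^{1/2})$, and these all go into $\sep{\mathcal{C}}$ at no asymptotic cost.

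Finally I would assemble $\sep{\mathcal{C}}$: it is the set of regions of $\mathcal{C}$ whose boundary contains a vertex of $S$, together with $H(\mathcal{C})$ and $L(\mathcal{C})$, i.e.\ $\mathcal{C}\setminus I(\mathcal{C})$ and the $S$-hitting members of $I(\mathcal{C})$. To output this set, I iterate over all regions $C \in \mathcal{C}$; for each one I decide membership in $I(\mathcal{C})$ (log-space, by Lemma~\ref{jordan-lemma-1}) and, if it is in $I(\mathcal{C})$, check whether any of the at most $d(C)$ boundary vertices lies in $S$. Since $S$ is stored — it has size $O(m^{1/2})$, costing $O(m^{1/2}\log n)$ bits — this check is a scan against the stored separator. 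Walking through the boundary points of $C$ and testing each against $A(\mathcal{C})$/$B(\mathcal{C})$ membership is log-space by the assumed basic operations on Jordan regions. Thus the dominant space cost is storing $S$, namely $O(m^{1/2}\log n)$, and the running time is polynomial since the Imai et al.\ subroutine is polynomial and the outer loop with its log-space recomputations adds only polynomial overhead. The main obstacle in writing this cleanly is the bookkeeping to confirm that the adjacency oracle for $G_T(\mathcal{C})$ — in particular the triangulation edges — can be answered in log-space on demand without ever materializing $G_T(\mathcal{C})$, but this is precisely what Lemmas~\ref{jordan-lemma-1} and~\ref{jordan-triangle} give us, so it reduces to citing them at the right moments.
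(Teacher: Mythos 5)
Your proposal follows the same route as the paper's proof: observe that $G_T(\mathcal{C})$ has $O(m)$ vertices and is available via the log-space constructions of Lemmas~\ref{jordan-lemma-1} and~\ref{jordan-triangle}, run the Imai et al.\ planar separator algorithm in $O(m^{1/2}\log n)$ space to get $S$, and then recover $\sep{\mathcal{C}}$ by scanning the regions against the stored $S$. The paper's version is only a few sentences; you have simply filled in the bookkeeping (oracle access to $G_T(\mathcal{C})$, log-space computation of $\weight{\cdot}$, and the handling of $H(\mathcal{C})$ and $L(\mathcal{C})$), all of which is consistent with what the paper intends.
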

\begin{proof}
We first see that on input $\mathcal{C}$, the graph $G_{T}(\mathcal{C})$ contains $O(m)$ vertices. We can output this graph in log-space (see Lemma~\ref{jordan-triangle}). Then, using the planar separator algorithm of Imai et al.~\cite{DBLP:conf/coco/ImaiNPVW13}, we can get a set of $O(m^{1/2})$ vertices which acts as the separator of this graph. This process can be done in $O(m^{1/2}\log n)$ space and polynomial time. Once we obtain this set, it is possible to construct $\sep{\mathcal{C}}$ by using this set.
\end{proof}

This completes the proof of Theorem~\ref{jordan-sep-th}. Then, by combining Theorem~\ref{jordan-sep-th} and Theorem~\ref{treewidth}, we conclude the following.

\begin{corollary}
There exists an algorithm that solves the \textsc{Reachability} on directed intersection graphs of Jordan regions in polynomial time and $O(m^{1/2}\log n)$ space.
\end{corollary}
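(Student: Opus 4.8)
The plan is to combine the space-efficient weighted-separator construction of Theorem~\ref{jordan-sep-th} with the generic reachability-from-separators framework of Theorem~\ref{treewidth}. Recall that Theorem~\ref{treewidth}, applied to the class $\mathcal{G}$ of intersection graphs of Jordan regions, requires a polynomial-time, $O(w(n,m)\log n)$-space subroutine that, given a graph $G(\mathcal{C})\in\mathcal{G}$ and an arbitrary vertex subset $U\subseteq \mathcal{C}$, outputs a separator of $U$ in the underlying undirected graph of size $O(w(n,m))$. Since intersection graphs are undirected and the separator notion ignores edge orientations, it suffices to exhibit such a subroutine for $w(n,m)=m^{1/2}$.

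First I would note that Theorem~\ref{jordan-sep-th} already supplies a \emph{weighted} separator: for \emph{any} weight function on $\mathcal{C}$ of total weight one, it outputs in polynomial time and $O(m^{1/2}\log n)$ space a set $\sep{\mathcal{C}}$ of size $O(m^{1/2})$ whose removal splits $\mathcal{C}$ into two parts of weight at most $2/3$ each, with no edge between them. To separate a prescribed nonempty subset $U\subseteq\mathcal{C}$, I would instantiate the theorem with the weight function assigning $1/|U|$ to each region of $U$ and $0$ to all others; computing $|U|$ and evaluating this weight function is clearly log-space, and the Preliminaries explicitly allow zero weights. The returned set $\sep{\mathcal{C}}$ then leaves at most $\tfrac{2}{3}|U|$ members of $U$ on each side and no edges across, i.e., it is a separator of $U$ of size $O(m^{1/2})$, produced within the required resource bounds.

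With this subroutine in hand, I would simply invoke Theorem~\ref{treewidth} with $w(n,m)=m^{1/2}$, which yields a polynomial-time, $O(m^{1/2}\log n)$-space algorithm deciding \textsc{Reachability} on directed intersection graphs of Jordan regions, where $n$ is the number of regions and $m$ the total number of crossings among their boundaries.

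As this is a corollary, there is no substantive obstacle; the only points warranting a moment's care are the bookkeeping of parameters --- checking that the ``size $O(m^{1/2})$, space $O(m^{1/2}\log n)$'' guarantee of Theorem~\ref{jordan-sep-th} matches the hypothesis of Theorem~\ref{treewidth} for the chosen function $w$ --- and the observation that restricting to a $0/1$-valued weight function does not weaken the separator guarantee, since that guarantee is stated for arbitrary non-negative weight functions of total weight one.
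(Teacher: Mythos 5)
Your proposal is correct and takes essentially the same route as the paper, which likewise obtains the corollary by directly combining Theorem~\ref{jordan-sep-th} with Theorem~\ref{treewidth}. Your additional detail --- instantiating the weight function of Theorem~\ref{jordan-sep-th} uniformly on the prescribed subset $U$ so as to meet the ``separator of $U$'' hypothesis of Theorem~\ref{treewidth} --- is exactly the intended (if unstated) bookkeeping.
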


\section{Unit Contact Disk Graphs (Penny Graphs)}\label{penny}
We now study the \textsc{Reachability} problem on unit contact disk graphs (penny graphs). Penny graphs are also known as unit coin graphs~\cite{cerioli2011note}. The vertices of a penny graph are unit circles in the plane, such that no two of those circles cross each other, and there is an edge between two vertices if and only if the corresponding circles touch each other.
We prove the following theorem.

\begin{theorem}\label{th:penny}
For every $\varepsilon> 0$, there exists a polynomial time algorithm that can solve \textsc{Reachability} in an $n$ vertex directed penny graph, using $O(n^{1/4 + \varepsilon})$ space.
\end{theorem}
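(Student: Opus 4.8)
The plan is to build on the recursive, multi-level separator strategy that was used for grid graphs in~\cite{DBLP:conf/fsttcs/JainT19}, adapting it to penny graphs via the axis-parallel separator of Carmi et al.~\cite{DBLP:journals/comgeo/CarmiCKKORRSS20}. The starting observation is that a penny graph on $n$ vertices is planar and sparse ($O(n)$ edges), but more importantly it admits a \emph{balanced axis-parallel line separator}: there is a horizontal or vertical line crossing only $O(\sqrt{n})$ of the pennies. First I would apply this separator recursively to produce a rectangular subdivision of the plane: repeatedly cut the current region by an axis-parallel line chosen by the Carmi et al.\ procedure, until every cell contains at most some threshold $\tau$ pennies. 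Choosing $\tau = n^{1/2}$ (or a tunable $n^{\alpha}$) gives roughly $n/\tau$ cells, each bounded by $O(\sqrt{\tau})$ boundary pennies; the crucial point, inherited from the grid-graph argument, is that the total number of pennies lying on cell boundaries across the whole subdivision is $\tilde O(n/\sqrt{\tau})$, because each level of the recursion contributes a geometrically controlled amount. This subdivision must be recomputable on the fly in small space — I would store only the recursion tree of splitting lines, not the cells themselves, so that membership of a penny in a cell can be recovered in $O(\mathrm{polylog}\, n)$ space by re-running the deterministic separator routine.

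Next I would construct an auxiliary graph $H$ whose vertices are exactly the boundary pennies of the subdivision (the ``portal'' vertices), together with $s$ and $t$. For each cell $c$, and each ordered pair $(u,v)$ of boundary pennies of $c$, I put an edge $u \to v$ in $H$ iff $v$ is reachable from $u$ using only pennies inside $c$ (plus the boundary of $c$). Reachability in $G$ from $s$ to $t$ is then equivalent to reachability in $H$ from $s$ to $t$, since any $s$–$t$ path in $G$ decomposes into segments that each stay within one cell and meet cell boundaries at portal vertices. The within-a-cell reachability queries needed to define the edges of $H$ can each be answered by a brute-force $O(\tau \log n)$-space traversal on the $\le \tau$ pennies of that cell (recomputing the cell's vertex set as above), so $H$ is implicitly accessible: given a query ``is $u\to v$ an edge of $H$?'', we spend $\tilde O(\sqrt{\tau})$ space — wait, more carefully, $O(\tau\log n)$ space — to decide it. The number of vertices of $H$ is $\tilde O(n/\sqrt\tau)$.

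Then I would solve reachability in the implicitly-given graph $H$ space-efficiently. Here the paper signals that a plain separator-based recursion (Theorem~\ref{treewidth}) is not enough to reach $n^{1/4+\varepsilon}$, and that a \emph{pseudo-separator} argument — again in the style of~\cite{DBLP:conf/fsttcs/JainT19} — is required. A pseudo-separator is a small set of vertices whose removal does not necessarily disconnect $H$ but does break all \emph{long} paths, so that after contracting each resulting piece one obtains a still-smaller graph on which one recurses. Concretely, because $H$ itself inherits a planar-like structure (its vertices sit on a grid of $O(n/\tau)$ rectangles, each contributing a clique-like block of size $O(\sqrt\tau)$ among its own boundary), one can find a pseudo-separator of size $\tilde O(\sqrt{n/\tau}\cdot\sqrt\tau) = \tilde O(\sqrt n / \tau^{1/4})$ — I should recheck this counting, but the point is that a second level of the same trick, applied now to $H$, multiplies the savings. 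Balancing the space used to materialise a cell ($\tilde O(\tau)$) against the space used to store the (pseudo-)separator of $H$ ($\tilde O(\sqrt{n}/\tau^{1/4})$ or similar), and iterating the construction $O(1/\varepsilon)$ times rather than twice, drives the total space down to $O(n^{1/4+\varepsilon})$.

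The main obstacle I expect is the second bullet of the plan made rigorous: showing that the auxiliary graph $H$, though built from a geometric subdivision, has a separator/pseudo-separator small enough and, more delicately, one that can be \emph{found} in small space without ever writing down $H$ in full. The grid-graph proof exploits the extremely regular coordinate structure of grids; penny graphs are only ``grid-like'' up to the $O(\sqrt\tau)$ pennies per cell boundary and up to the irregular shapes of the recursively produced rectangles, so I will need to argue that the boundary-penny count per cell is genuinely $O(\sqrt\tau)$ (this uses that pennies are unit disks with disjoint interiors, so only $O(\ell)$ of them can touch a segment of length $\ell$, combined with a bound on the perimeter of each cell), and that the recursion depth — hence the $\log$ factors and the accumulated boundary vertices — stays under control. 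The careful bookkeeping of how many pennies end up as portals, summed over all recursion levels, and the proof that this stays $\tilde O(n/\sqrt\tau)$ rather than blowing up by the recursion depth, is where the real work lies; everything else is a reduction to the grid-graph machinery plus Theorem~\ref{treewidth}.
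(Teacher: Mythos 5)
Your high-level skeleton (axis-parallel balanced separators giving a rectangular subdivision, an auxiliary ``portal'' graph on boundary pennies whose edges encode within-cell reachability, then a pseudo-separator on that auxiliary graph) is the same as the paper's. But there are two genuine gaps. First, your parameter choice and your method for answering edge queries of $H$ do not fit inside the space budget. You propose to decide whether $u\to v$ is an edge of $H$ by a brute-force traversal of the cell, costing $O(\tau\log n)$ space; for this to be $O(n^{1/4+\varepsilon})$ you need $\tau\le n^{1/4+\varepsilon}$, but then $H$ has $\widetilde{O}(n/\sqrt{\tau})\ge n^{7/8-O(\varepsilon)}$ vertices and any separator-based routine on $H$ needs far more than $n^{1/4+\varepsilon}$ space. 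No choice of $\tau$ makes both sides work. The paper resolves this the opposite way: it takes \emph{large} cells of size $n^{1-\varepsilon}$ (so $H$ has only $O(n^{1/2+\varepsilon/2})$ vertices and a separator of size $O(n^{1/4+O(\varepsilon)})$), never materialises a cell, and answers each edge query of $H$ by \emph{recursing the whole algorithm} on the penny subgraph inside that cell; the recursion depth is constant, so the space only grows by a constant factor. Your ``iterate $O(1/\varepsilon)$ times'' remark gestures at this but is attached to shrinking the separator of $H$, not to the cell queries, which is where the recursion is actually needed.

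Second, and more fundamentally, you defer the existence and log-space constructibility of the pseudo-separator of $H$ to ``grid-graph machinery,'' but $H$ is not a grid graph and is not even planar, and the paper explicitly notes the grid-graph argument cannot be reused directly. The step that makes everything work is a structural lemma specific to this construction (Lemma~\ref{lemma-penny-1}): if two auxiliary edges $(v_i,v_j)$ and $(v_k,v_l)$ cross, then $(v_i,v_l)$ and $(v_k,v_j)$ are also edges of $H$, because the two underlying directed paths live inside a common cell of a \emph{planar} penny graph and must therefore share a vertex. This crossing-closure property is what allows one to extract a maximal planar subgraph of $H$ in log-space, triangulate it, run the planar separator of Imai et al., and then ``shield'' the artificial triangulation edges so that any $s$--$t$ path in $H$ crossing the separator cycle necessarily crosses a genuine edge of the pseudo-separator. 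Without stating and proving this property (and the shielding argument built on it), the claim that $H$ admits a small pseudo-separator findable in $O(n^{1/4+\varepsilon})$ space is unsupported; your bookkeeping of portal counts, while necessary, is the routine part. Also note that the operative definition of pseudo-separator here is not ``breaks all long paths'': it is a subgraph $C$ such that deleting its vertices \emph{and} all edges that cross an edge of $C$ leaves only small components, which is exactly what the marking algorithm (marking separator vertices and, for each separator edge, the tail of the closest crossing edge) exploits.
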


We know that given a set of $n$ unit disks with $m$ intersections, there exists an axis-parallel line intersecting $O(\sqrt{m+n})$ disks such that each half-plane separated by that line contains at most $4n/5$ disks~\cite{DBLP:journals/comgeo/CarmiCKKORRSS20}.
They call such a separator a \emph{balanced separator}.
We first describe that how such a balanced separator can be obtained in a space-efficient manner.

Let $G=(D,E)$ be a directed penny graph, where $D=\{d_1,\ldots,d_n\}$ is a set of unit disks, and there is an edge $e\in E$ between two disks $d_i$ and $d_j$ (for some $i,j\in [n]$) if $d_i$ and $d_j$ touch each other. Moreover, each edge in $e\in E$ is a directed edge.
Let $R$ be a rectangular bounding box that contains the disks in $D$. Our algorithm to solve the \textsc{Reachability} problem on penny graphs consists of four steps: (1) we find the balanced separators and use them to obtain a rectangular subdivision of the plane, (2) create an auxiliary graph,
(3) obtain a pseudo-separator, (4) solve the \textsc{Reachability}.

\subsection{Obtaining Rectangular Subdivion}
In this step, the directions of the edges of $G$ are not relevant, so we consider the input as a set $D$ of unit non-overlapping unit disks.
The main idea is to divide the rectangle $R$ into smaller rectangles such that each smaller rectangle contains at most $n^{1-\varepsilon}$ unit disks, and the number of unit disks intersecting the boundary of any rectangle is at most $n^{1/2 + \varepsilon/2}$.

In what follows, we describe a sweeping procedure.
We discuss primarily in terms of vertical sweeping, and horizontal sweeping is done similarly.

We consider the disks in sorted order based on the x-coordinates of their corresponding centers, which can be done in log-space.
Consider a vertical sweep line $\ell$.
We start with the leftmost disk from $D$ and sweep until we find a balanced separator that is intersecting $O(\sqrt{m+n})$
many disks such that each half-plane separated by that line contains at most $4n/5$ disks. If we do not find any such vertical separator, we apply the same procedure with a horizontal sweep line (say $\ell'$) sweeping from top to bottom.
From Theorem~2 in \cite{DBLP:journals/comgeo/CarmiCKKORRSS20},
we know that there exists an axis-parallel balanced separator. Therefore, we shall obtain a balanced separator by doing this procedure.

When the sweep line $\ell$ encounters or leaves a disk, we call it an event. The number of \emph{events} is precisely $2n$. Note that it is possible to test by using $\log n$ space whether a disk $d_i$ intersects the line $\ell$. We maintain a counter $c_{\ell}$ corresponding to the sweep line $\ell$. At each event $k$ (for some $k\in [n]$), we determine the number of disks that intersect $\ell$, by checking each disk that whether it is intersecting with $\ell$ or not in log-space.
Then when the next event happens, we increase the counter's value by $1$ if it intersects a new disk, or decrease it by $1$, otherwise. Using this procedure, we can find a separator line $\ell$, which is a balanced separator. It is also clear that we can determine such a separator by using $\log n$ space. Once we find such a separator, we only store the $x$-coordinate (resp. $y$-coordinate) of the vertical line $\ell$ (resp. horizontal line $\ell'$).
Note that it is possible to find the actual set of disks that form such a separator in log-space when needed.

We subdivide the rectangles repeatedly until each of the rectangles has smaller than $n^{1-\varepsilon}$ disks. Initially, we have the rectangle $R$ containing all the disks. Let $\mathcal{R}_0=\{R\}$ be the initial set of rectangles.
After step $i$, we have the rectangles $\mathcal{R}_i$ be the set of rectangles, and We pick the rectangle with more than $n^{1-\varepsilon}$ disks and subdivide it further using the above process to get $\mathcal{R}_{i+1}$. See Figure~\ref{grid-penny} for an illustration.

\begin{figure}[ht]
\centering
\includegraphics{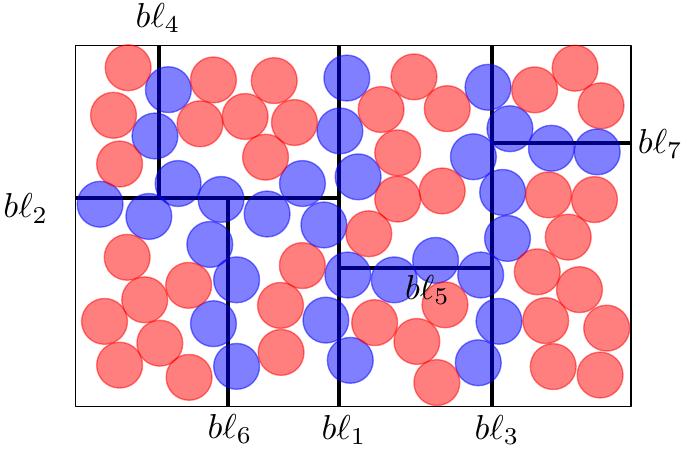}
\caption{An illustration of the rectangular subdivision by using the balanced separators. The blue disks are the ones intersected by the balanced separator lines, and the red disks are contained inside the rectangles.}
\label{grid-penny}
\end{figure}

We calculate the number of separators required to reach this termination point. From~\cite{DBLP:journals/comgeo/CarmiCKKORRSS20}, we know that one $\sqrt{m+n}$ size separator guarantees that on each side there are at most $\frac{4n}{5}$ disks. Since the class of penny graphs is a subclass of planar graphs, the total number of edges is at most $3n-6$. Now at each step, we have obtained a \emph{balanced separator} whenever it has satisfied the criteria. We need $O(n^{\varepsilon})$ many separators to have at most $O(n^{1-\varepsilon})$ disks in each cell.
To store these line separators, we need to use $O(n^{\varepsilon}\log n)$ space.

The initial graph $G$ is divided into $n^{\varepsilon}\times n^{\varepsilon}$ rectangles obtained from the above procedure. Let $\mathcal{Z}$ be the set of all rectangles. The idea is to reduce the size of the graph $G$ by dropping the disks that are entirely contained inside some rectangle and are not touched or intersected by its boundary line.
However, while reducing the size of the number of disks, we need to ensure that the \textsc{Reachability} information is fully preserved between any pair of disks in $G$. For that, we proceed to the next step and build an auxiliary graph.

\subsection{Building Auxiliary Graph}
For a rectangle $R$, let $G_R$ be the graph defined as follows. The vertex set of $G_R$ is the set of all disks which intersect at least one of the boundaries of $R$. We add an edge from a vertex $u$ to a vertex $v$ in $G_R$ if there is a directed path from $u$ to $v$, which contains only the disks present inside the rectangle $R$. Let $v_1$ be an arbitrary disk, and let $\{v_1,\ldots,v_k\}$ be the sets of disks intersecting the boundary of $R$ in the anti-clockwise order. We place the disk centers on the boundary while preserving (1) the order of them on the boundary, (2) each vertex $v_i$ is on the side of the boundary that intersects it. However, if it intersects by more than one side (one vertical and one horizontal), we create an additional dummy vertex (say $v'_i$) and assign
$v_i$ and $v'_i$ to the horizontal and vertical side, respectively. Moreover, we add a bidirectional edge between $v_i$ and $v'_i$. see Figure~\ref{aux-penny}(b)) for an illustration.

The edges of $G_R$ are drawn in the following manner. If there is an edge between two vertices that are on different sides then we give a directed straight line edge. Otherwise, we join them by a directed circular arc. Moreover, we ensure that there is a crossing between two edges $(v_i,v_k)$ and $(v_l,v_j)$ in the drawing if and only if there is an ordering on the boundary which is one of the followings - 1. $\{v_i\prec v_l\prec v_k \prec v_j\}$, 2. \{$v_j\prec v_l\prec v_k \prec v_i$\}, 3. \{$v_l\prec v_i\prec v_j \prec v_k$\}, 4. $\{v_l\prec v_i\prec v_j \prec v_k\}$. There exists such a drawing as their arc edges can be drawn arbitrarily close to the boundary lines.

\begin{figure}[ht]
\centering
\includegraphics{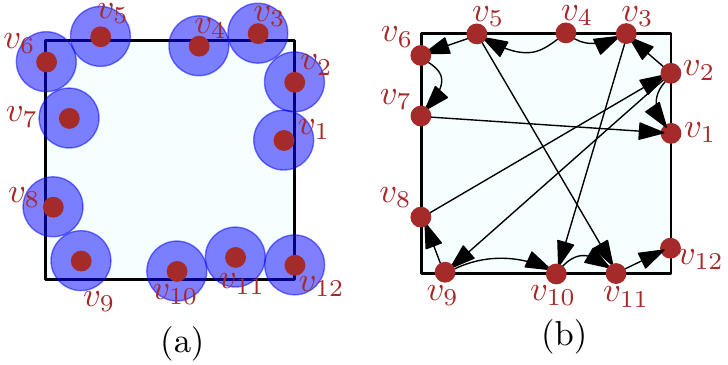}
\caption{An illustration of the representation of the graph $G_R$, (a) the blue disks are the vertices of $G_R$ intersecting the boundary of the rectangle (b) drawing of $G_R$.}
\label{aux-penny}
\end{figure}

Now, by combining the graphs defined for each rectangle, we define the auxiliary graph $Aux_{\varepsilon}(G)$, for $0<\varepsilon<1$.
The vertex set of $Aux_{\varepsilon}(G)$ is $\bigcup_{R\in \mathcal{Z}} V(G_R)$ and the edge set is $\bigcup_{R\in \mathcal{Z}} E(G_R)$. Notice that
$Aux_{\varepsilon}(G)$ might have parallel edges since there exist paths between vertices in two adjacent rectangles, and in that case, we keep both of these edges in their respective rectangles. The total number of vertices in each cell is $O(n^{1-\varepsilon})$. Hence the total number of vertices in $Aux_{\varepsilon}(G)$ is $O(n^{1/2+\varepsilon/2})$.
We point out that we do not store $Aux_{\varepsilon}(G)$ explicitly because that requires too much space. Instead, we deal with each cell recursively when the subroutine queries for an edge in that cell of $Aux_{\varepsilon}(G)$.
Now, we prove the following property about the auxiliary graph.

\begin{lemma}\label{lemma-penny-1}
Let $G$ be a penny graph and $e=(v_i,v_j)$ and $e'=(v_k,v_l)$ be two edges in $Aux_{\varepsilon}(G)$. If $e$ and $e'$ cross each other, then $Aux_{\varepsilon}(G)$ also contains the edges $(v_i,v_l)$ and $(v_k,v_j)$.
\end{lemma}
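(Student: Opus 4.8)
The plan is to prove Lemma~\ref{lemma-penny-1} by unpacking the crossing criterion that was built into the drawing of $Aux_{\varepsilon}(G)$, observing that a crossing forces the two edges to live in a common rectangle, and then showing that the transitive structure of that rectangle's graph $G_R$ supplies the two new edges. First I would note that edges of $Aux_{\varepsilon}(G)$ come from the graphs $G_R$, one per rectangle $R \in \mathcal{Z}$, and that two edges can cross in the drawing only if they are drawn within the same rectangle $R$ (arcs are drawn arbitrarily close to the boundary of their own rectangle, and straight-line chords stay inside their own rectangle, so edges belonging to different rectangles are drawn in disjoint regions and cannot cross). Hence $e=(v_i,v_j)$ and $e'=(v_k,v_l)$ are both edges of a single $G_R$, and all four of $v_i,v_j,v_k,v_l$ are vertices of $G_R$, i.e.\ disks intersecting the boundary of $R$, placed on $\partial R$ in the anti-clockwise cyclic order.

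Next I would invoke the explicit crossing condition stated just before the lemma: $e$ and $e'$ cross in the drawing of $G_R$ if and only if the four endpoints interleave along the boundary cycle, i.e.\ up to relabelling of each edge's endpoints we have a cyclic order of the form $v_i \prec v_k \prec v_j \prec v_l$ (the four listed cases are exactly the ways the pairs $\{v_i,v_j\}$ and $\{v_k,v_l\}$ can alternate around $\partial R$). So a crossing means the endpoints alternate $v_i, v_k, v_j, v_l$ around the boundary. The key observation is now about where the \emph{paths} realizing these edges live: by definition of $G_R$, the edge $e=(v_i,v_j)$ certifies a directed path $P$ from $v_i$ to $v_j$ using only disks inside $R$, and $e'=(v_k,v_l)$ certifies a directed path $P'$ from $v_k$ to $v_l$ using only disks inside $R$. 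Since $v_i,v_j,v_k,v_l$ all lie on $\partial R$, the paths $P$ and $P'$ both run through the interior of the same rectangle $R$ and connect points that interleave on its boundary; by planarity (the penny graph $G$ restricted to $R$ together with its boundary disks is planar, being a subgraph of a planar graph), $P$ and $P'$ must share a disk $w$ inside $R$.

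From a common disk $w$ on both paths I would extract the two new edges directly: the prefix of $P$ from $v_i$ to $w$ concatenated with the suffix of $P'$ from $w$ to $v_l$ is a directed path from $v_i$ to $v_l$ using only disks inside $R$, so $(v_i,v_l)$ is an edge of $G_R$ and hence of $Aux_{\varepsilon}(G)$; symmetrically, the prefix of $P'$ from $v_k$ to $w$ followed by the suffix of $P$ from $w$ to $v_j$ gives a directed path from $v_k$ to $v_j$ inside $R$, so $(v_k,v_j)$ is likewise an edge. That completes the argument. The main obstacle I anticipate is making the "planar paths with interleaved endpoints must intersect" step fully rigorous: one has to argue that the relevant configuration — the boundary cycle of $R$ plus the disk-touching graph inside it — is genuinely planar (it is, as a subgraph of the planar graph $G$ drawn with the disks' centers, and the boundary of $R$ is a simple closed curve), and then apply a Jordan-curve / planarity argument that two interior arcs joining two pairs of points that alternate on a simple closed curve cannot be disjoint. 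A secondary subtlety is the dummy-vertex gadget: if some $v_i$ was split into $v_i$ and $v_i'$ on the horizontal and vertical sides with a bidirectional edge between them, one should check the cyclic-order reasoning still goes through for whichever copy actually serves as the endpoint of the crossing edges; this is routine since the pair $\{v_i,v_i'\}$ occupies one position in the cyclic order up to the length-one detour across the bidirectional edge.
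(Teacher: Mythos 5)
Your proposal is correct and follows essentially the same route as the paper's proof: a crossing forces the two edges into a single rectangle with interleaved endpoints on its boundary, planarity of the penny graph inside that rectangle forces the two witnessing directed paths to share a disk, and splicing the paths at that common disk yields the edges $(v_i,v_l)$ and $(v_k,v_j)$. Your write-up is in fact somewhat more careful than the paper's (the Jordan-curve justification of the shared vertex, the explicit splicing, and the dummy-vertex check), but it is the same argument.
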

\begin{proof}
Consider two edges $e=(v_i,v_j)$ and $e'=(v_k,v_l)$ be two edges in $Aux_{\varepsilon}(G)$. From the definition, these edges corresponding to the directed paths in the input graph $G$. From the construction, we know that the ordering of the end vertices are one of the following -
$\{v_i\prec v_l\prec v_k \prec v_j\}$, \{$v_j\prec v_l\prec v_k \prec v_i$\}, \{$v_l\prec v_i\prec v_j \prec v_k$\}, $\{v_l\prec v_i\prec v_j \prec v_k\}$. Now, we know that the corresponding directed paths are fully embedded inside the grid cell. If two directed path intersects in the Auxiliary graph (based on our definition) of an input penny graph, that is embedded inside a rectangle, then they must have one common vertex;
see Figure~\ref{aux-lemma-penny}. Otherwise, it will not admit a planar embedding. Hence, there must be directed path from $v_i$ to $v_l$, and $v_k$ to $v_j$. Thereby proving the lemma.\end{proof}
\begin{figure}[ht]
\centering
\includegraphics{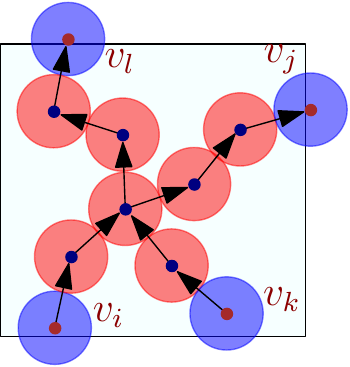}
\caption{An illustration of two directed paths inside a grid cell.}
\label{aux-lemma-penny}
\end{figure}

\subsection{Constructing Pseudo-Separator}
An essential property of a vertex separator is that, for any two vertices $u$ and $v$, a path between them must contain a vertex from the separator if $u$ and $v$ lie in two different components with respect to the separator.

We use a separator construction for the auxiliary graph $Aux_{\varepsilon}(G)$. However, note that $Aux_{\varepsilon}(G)$ is not a planar graph anymore. Therefore, we need a special kind of separator, which we call a \emph{pseudo-separator}.

The notion of the pseudo-separator was introduced by Jain and Tewari~\cite{DBLP:conf/fsttcs/JainT19} in the context of grid graphs.
However, since the class of Penny graphs is a superclass of grid graphs, it is not possible to use their idea directly.

Let $G$ be a penny graph and $H=(V_1,E_1)$ be a vertex induced subgraph of $Aux_{\varepsilon}(G)$ with $h$ vertices. Let $f:\mathbb{N} \rightarrow \mathbb{N}$ be a function. A subgraph $C=(V_2,E_2)$ of $H$ is said to be an $f(h)$-PseudoSeparator of $H$, if the size of every connected component in $H \cdot C$ is at
most $f(h)$, where the graph $H\cdot C=(V_3,E_3)$ is defined as $V_3=V_1\setminus V_2$ and $E_3=E_1\setminus \{e\in E_1 \mid \exists e'\in E_2, e$ crosses $e'\}$. The size of $C$ is the total number of vertices and edges of $C$ summed together.

The general idea of our approach is the following.
Consider a vertex induced subgraph $H$ of $Aux_{\varepsilon}(G)$.
We choose a maximal subset of edges such that $H$ is a planar graph, thus admits a planar embedding.
Next, we triangulate this chosen sub-graph.
We show that until this point, each operation can be performed in log-space.
Then, we use the algorithm of Imai et al.~\cite{DBLP:conf/coco/ImaiNPVW13} to obtain a separator of the triangulated graph. In what follows, we describe these procedures in detail.

We start with a maximal planar graph $H(Aux_{\varepsilon}(G))$ of $Aux_{\varepsilon}(G)$. The vertex set of $H(Aux_{\varepsilon}(G))$ is same as the vertex set of $Aux_{\varepsilon}(G)$. For each rectangle, we index the vertices in ascending order while traversing them in anti-clockwise direction. For each rectangle $R$, an edge $e_l=(u_l,v_l)$ of $\rect{G}{\rho}$ is added to $H(Aux_{\varepsilon}(G))$ if there is no edge $e_k=(u_k,v_k)$ such that the ordering of the vertices is one of the following -
$\{u_k \prec u_l \prec v_k \prec v_l\}$,
$\{v_k \prec u_l \prec u_k \prec v_l\}$,
$\{u_k \prec v_l \prec v_k \prec u_l$\},
$\{u_k \prec v_l \prec v_k \prec u_l\}$.
We prove that,
$H(Aux_{\varepsilon}(G))$ is indeed a maximal planar graph of $Aux_{\varepsilon}(G)$.

\begin{lemma}[$\star$]\label{claim-penny}
$H(Aux_{\varepsilon}(G))$ is a maximal planar graph of $Aux_{\varepsilon}(G)$.
\end{lemma}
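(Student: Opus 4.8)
The statement asserts two things about $H(Aux_{\varepsilon}(G))$: that it is planar, and that it is maximal among planar subgraphs of $Aux_{\varepsilon}(G)$ that respect the rectangular partition, i.e.\ no further edge of $Aux_{\varepsilon}(G)$ can be added while still admitting a planar drawing in which the edges of each rectangle stay inside that rectangle. The plan is to read the construction as a greedy per-rectangle procedure: inside each rectangle $R'$ the vertices of $V(G_{R'})$ carry their anti-clockwise indices, the edges of $G_{R'}$ are examined in increasing order of their (indexed) endpoints, and an edge is placed into $H(Aux_{\varepsilon}(G))$ exactly when it does not realise any of the four listed orderings against an already-accepted edge of $R'$ --- which, once $\prec$ is read as the fixed anti-clockwise order, is precisely the condition that its two endpoints do not interleave, along $\partial R'$, with the endpoints of an accepted edge. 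Both planarity and maximality then fall out of this reformulation.

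For planarity, I would exhibit the canonical drawing. The rectangles of $\mathcal{Z}$ tile the bounding box and pairwise share only parts of their boundaries, so their open interiors are disjoint; each vertex-disk is placed on the boundary of the rectangle(s) that it meets in the prescribed cyclic order, a dummy vertex is introduced at a corner whenever a disk meets both a vertical and a horizontal side, and the bidirectional edge $v_i$--$v_i'$ is drawn as an arbitrarily short arc near that corner. Every edge of $G_{R'}$ is drawn as a chord or a near-boundary arc inside $R'$, so two edges cross exactly when they lie in the same rectangle and their four endpoints interleave on that rectangle's boundary. Since the construction discards, from each such interleaving pair, the edge with the larger endpoint indices, no two edges of $H(Aux_{\varepsilon}(G))$ interleave, so the canonical drawing is crossing-free and $H(Aux_{\varepsilon}(G))$ is planar. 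Here Lemma~\ref{lemma-penny-1} is used to make sure the ``crossing'' relation on edges matches the geometric interleaving of endpoints on $\partial R'$.

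For maximality, take any edge $e=(u,v)$ of $Aux_{\varepsilon}(G)$ with $e\notin E(H(Aux_{\varepsilon}(G)))$, lying in a rectangle $R'$. By construction $e$ was rejected because it realised one of the four orderings against some edge $e_k$ of $R'$ with smaller endpoint indices; arguing by induction on the index order --- the smallest edge of a family of mutually interleaving edges is always kept --- one can take such an $e_k$ to belong to $H(Aux_{\varepsilon}(G))$. Because $e$ and $e_k$ interleave on $\partial R'$ and both must be drawn inside $R'$, adding $e$ to $H(Aux_{\varepsilon}(G))$ forces a crossing with $e_k$; restating this through the cycle that $\partial R'$ induces on the accepted vertices of $R'$ shows that $H(Aux_{\varepsilon}(G))+e$ has no planar drawing respecting the rectangular partition. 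Hence $H(Aux_{\varepsilon}(G))$ is maximal, which completes the proof.

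The step I expect to be the main obstacle is pinning down the exclusion rule so that it is unambiguous and actually yields a \emph{maximal} crossing-free set rather than merely a crossing-free one: read symmetrically, the four listed orderings only say ``$e$ and $e_k$ cross'', and without a tie-break one could discard both members of a crossing pair (two vertex-disjoint crossing chords are, abstractly, planar together). I would fix this by letting the vertex indices determine the order in which a rectangle's edges are processed and keeping an edge iff it interleaves with no previously kept edge, and then verifying --- the one genuinely computational point --- that the four orderings in the statement are exactly the interleaving configurations under the chosen orientation of $\prec$. A secondary subtlety is that the edges inside a rectangle are not arbitrary chords but come from directed paths confined to that rectangle, so Lemma~\ref{lemma-penny-1} is the right tool for transferring maximality from the single-rectangle chord picture to the combined multi-rectangle graph $Aux_{\varepsilon}(G)$.
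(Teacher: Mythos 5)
Your planarity argument is fine and matches the paper's: two surviving edges cannot interleave, because whichever of the two owns the $\prec$-minimal endpoint would cause the other to be rejected. The gap is in maximality. The paper's selection rule is a \emph{one-shot} test: $e_l$ is rejected if \emph{some} edge $e_k$ of $Aux_{\varepsilon}(G)$ (accepted or not) crosses it in one of the listed orderings. Under that rule your key claim --- ``the smallest edge of a family of mutually interleaving edges is always kept, so a rejected edge can be assumed to be rejected by an edge of $H$'' --- is false for an arbitrary chord set: take boundary vertices $1,\dots,6$ and chords $(1,3)$, $(2,5)$, $(4,6)$. Then $(2,5)$ is rejected because of $(1,3)$, and $(4,6)$ is rejected because of $(2,5)$, yet $(4,6)$ crosses nothing in $H=\{(1,3)\}$, so $H$ is not maximal. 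What saves the construction is precisely that the edge set of $Aux_{\varepsilon}(G)$ is \emph{closed under crossings} (Lemma~\ref{lemma-penny-1}): the paper picks, among the edges crossing the rejected edge $e_l$, one of maximum \emph{width}, and shows that if that edge were itself rejected, Lemma~\ref{lemma-penny-1} would manufacture an edge of strictly larger width still crossing $e_l$ --- a contradiction, so the maximum-width crosser is accepted. You cite Lemma~\ref{lemma-penny-1}, but only for matching ``crossing'' to ``interleaving'' and for gluing rectangles together, not for this essential extremal step; without it your maximality argument does not go through.

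Your proposed repair --- redefine the rule as a greedy pass that keeps an edge iff it interleaves with no \emph{previously kept} edge --- does make maximality immediate, but it proves the lemma for a different graph than the one the paper defines (in the example above the greedy $H$ is $\{(1,3),(4,6)\}$, the paper's is $\{(1,3)\}$ before closure is taken into account), and it endangers the downstream requirement that $H(Aux_{\varepsilon}(G))$ be computable in log-space: deciding membership under your rule is a lexicographically-first-maximal-independent-set computation, whereas the paper's one-shot test is a plain existential check over all edges. So the correct route is to keep the paper's rule and supply the missing closure-plus-maximum-width argument rather than to change the rule.
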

\begin{proof}
Note that $H(Aux_{\varepsilon}(G))$ is a planar graph that comes from the construction. Since for each rectangle $R$, we have selected edges from $\rect{G}{\rho}$ such that no two edges intersect each other.
We prove the maximality by contradiction.
Assume that there exists an edge $e_l=(u_l,v_l)$ in
$Aux_{\varepsilon}(G)$ that is not chosen in $H(Aux_{\varepsilon}(G))$, and this edge does not intersect any edge in $H(Aux_{\varepsilon}(G))$. Let $R^*$ be the rectangle that contains $e_k$.
We define the width of an edge $e=(u,v)$ as the smallest number of vertices that we encounter between $u$ and $v$ along the rectangle boundary. We pick the edge $e_b=(u_b,v_b)$ with the largest width,
whose exactly one endpoint lies in between $u_l,v_l$ in the total ordering of the boundary vertices (see Figure~\ref{maximal-planar-penny}). From the definition of $H(Aux_{\varepsilon}(G))$ we know that there exists be an edge.

Now, if this edge is a directed edge from $u_b$ to $v_b$,
and such an edge is not chosen in $H(Aux_{\varepsilon}(G))$ because there is another edge $e_g=(u_g,v_g)$ such that $u_g$ lies between $u_b$ and $u_l$, then by applying Lemma~\ref{lemma-penny-1} we can argue that there is an edge between $u_g$ to $v_b$ with larger width.
This contradicts the fact that the edge $(u_b, v_b)$ has the largest width among those which intersects $(u_l, v_l)$.
See Figure~\ref{maximal-planar-penny} for an illustration.
Moreover, based on the direction of the edges,
there are other uniform cases that we do not discuss explicitly.\end{proof}

\begin{figure}[ht]
\centering
\includegraphics{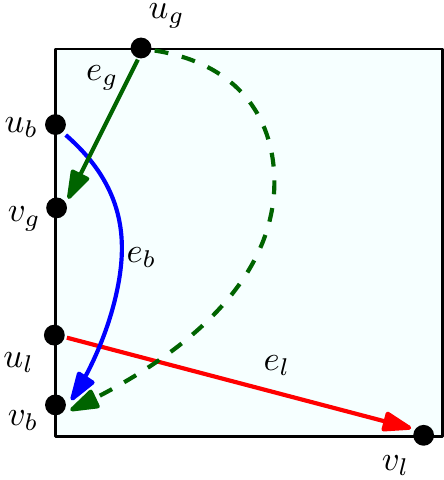}
\caption{An illustration of the proof of Lemma $\ref{claim-penny}$}
\label{maximal-planar-penny}
\end{figure}

Next, we triangulate $H(Aux_{\varepsilon}(G))$ by adding the boundary edges.
Then for each rectangle, we consider a face of $H(Aux_{\varepsilon}(G))$ and add the edges to complete the triangulation by a similar procedure described in Lemma~\ref{jordan-triangle} from Section~\ref{jordon}. Moreover, the directions of the edges are arbitrary.
Let $\hat{H}(Aux_{\varepsilon}(G))$ be the triangulated graph.
Note that this procedure can be done in log-space. Next, we use a Lemma from~\cite{DBLP:conf/coco/ImaiNPVW13} that is stated below.

\begin{lemma}[\cite{DBLP:conf/coco/ImaiNPVW13}]\label{imai-penny}
For each $\beta>0$, there exists a polynomial time algorithm and $\tilde{O}(h^{1/2+\beta/2})$ space algorithm that takes a $h$-vertex planar graph $P$ as input and and outputs a set of vertices $S$, such that $|S|$ is $O(h^{1/2+\beta/2})$ and removal of $S$ disconnects the graph into components of size $O(h^{1-\beta})$.
\end{lemma}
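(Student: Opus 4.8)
The plan is to obtain $S$ as an \emph{$r$-division} of $P$ with $r=h^{1-\beta}$, built by recursively applying a space-efficient planar separator. First I would triangulate $P$ using Lemma~\ref{jordan-triangle} — this only adds edges, and a separator of the triangulation is also a separator of $P$ — so that the $O(\sqrt{n}\log n)$-space, $O(\sqrt{n})$-size separator algorithm of \cite{DBLP:conf/coco/ImaiNPVW13} for triangulated planar graphs can be invoked on every subgraph produced during the recursion (each of which gets re-triangulated before the call).

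The recursion is the standard one. Given a current subgraph $Q$ with $h_Q$ vertices: if $h_Q\le r$ we stop; otherwise we run the separator algorithm on a triangulation of $Q$ to get a set $S_Q$ of $O(\sqrt{h_Q})$ vertices whose removal splits $Q$ into parts of weight at most $2/3$, and we recurse on each part of size still exceeding $r$. Let $S$ be the union of all the $S_Q$ produced. A counting argument over the recursion tree — its depth is $O(\beta\log h)$ since the subgraph size drops by a constant factor per level (so $\beta$ constant gives depth $O(\log h)$), the subgraph sizes along any root-to-leaf path decay geometrically, and the $O(h/r)$ bottom-level pieces each contribute $O(\sqrt{r})$ boundary vertices — yields $|S|=O(h/\sqrt{r})=O(h^{1/2+\beta/2})$. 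By construction every connected component of $P$ after removing $S$ is contained in a single bottom-level piece, hence has size $O(r)=O(h^{1-\beta})$.

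For the space bound I would not store the recursion tree but traverse it depth-first, keeping at the current node only the separators along the root-to-node path; the subgraph at that node is given implicitly as the vertices avoided by all ancestor separators that additionally lie on the prescribed side of each, and the side of a vertex relative to an already-removed separator is decided by an undirected planar connectivity test, which is in logspace. The ancestor separators together form a subset of $S$, so they occupy $O(h^{1/2+\beta/2}\log h)$ bits; each invocation of the planar separator algorithm needs $\tilde{O}(\sqrt{h_Q})=\tilde{O}(\sqrt{h})$ further working space; and $S$ is streamed to the output rather than stored. Altogether the procedure runs in polynomial time and $\tilde{O}(h^{1/2+\beta/2})$ space. (A finer telescoping of the separator sizes along a path in fact brings the working space down to $\tilde{O}(\sqrt{h})$, but the coarser bound already matches the statement.)

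The main obstacle is executing the recursion without storing it: re-deriving on demand the subgraph living at a node of the recursion tree. This requires keeping (or recomputing) all ancestor separators, answering the side-of-separator connectivity queries inside the partially separated graph, and — to make the analysis go through — verifying that each returned separator is genuinely balanced so the recursion tree has the claimed depth and the boundary sizes telescope. The delicate corner cases are parts whose size is only slightly above $r$ and the bookkeeping needed so that the triangulation edges added to $Q$ at one level do not leak into the subgraphs handed to deeper levels.
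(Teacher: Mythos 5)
The paper does not actually prove this lemma: it is imported verbatim as a black box from Imai et al.~\cite{DBLP:conf/coco/ImaiNPVW13}, and no argument for it appears anywhere in the text (the same reference is also invoked earlier, in Section~\ref{jordon}, only in the weaker form of a single $O(\sqrt{n})$-size, $2/3$-balanced separator for triangulated planar graphs). Your sketch is therefore not ``the paper's route'' versus ``a different route''--it is a reconstruction of the cited result, and it is essentially the standard one: build an $r$-division with $r=h^{1-\beta}$ by recursive application of the basic balanced separator, bound $|S|$ by the recurrence $T(h)\le c\sqrt{h}+T(h_1)+T(h_2)$ with $h_1+h_2\le h$ and $\max(h_1,h_2)\le 2h/3$ (giving $T(h)=O(h/\sqrt{r})=O(h^{1/2+\beta/2})$), and execute the recursion depth-first with subgraphs represented implicitly via the ancestor separators and logspace undirected connectivity. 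This is the same mechanism used in the space-bounded-separator literature the paper builds on, so the outline is sound and the exponent arithmetic checks out; the points you flag as delicate (re-triangulating implicitly represented, possibly disconnected pieces in logspace; keeping triangulation edges from leaking across levels; recursing on the parts $V_1,V_2$ rather than on individual components so the depth stays $O(\beta\log h)$) are exactly the ones a full proof would have to nail down, and none of them is an obstruction. The one thing worth making explicit if you expand the sketch is that the $2/3$ balance guarantee from the paper's statement of the Imai et al.\ separator is with respect to a weight function, so you must instantiate it with the uniform weights on the current piece $Q$ to get the cardinality bound $\max(h_1,h_2)\le 2h_Q/3$ that your recurrence and depth bound rely on.
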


Next, we construct the pseudo-separator $S_H(Aux_{\varepsilon}(G))$ by using the following steps. First, by Lemma~\ref{imai-penny} we find a set $S$ in $\hat{H}(Aux_{\varepsilon}(G))$ that divides it into components of size $O(h^{1-\beta})$, where $h$ is the number vertices in $\hat{H}(Aux_{\varepsilon}(G))$.
We add the vertices and edges of $S$ to the vertex and edge set of $S_H(Aux_{\varepsilon}(G))$, respectively.
However, there is a small caveat to use Lemma~\ref{imai-penny} on $\hat{H}(Aux_{\varepsilon}(G))$. In order to triangulate the graph, we have added edges that were originally not part of the auxiliary graph. Therefore, for each edge $e_k=(u_k,v_k)$ of the triangulation that is present in some rectangle $\mathcal{R}$, we consider a set of at most four edges of $Aux_{\varepsilon}(G)$ that form a so-called \emph{shield} around the edge $e_k$. Two of these edges start from $u_k$ ending at two vertices $v_p,v'_p$, where $v_p$ and $v'_p$ are the closest points to the left and the right of $u_k$, respectively, in the total ordering of the boundary vertices. The other two edges start from $v_k$ and end at two vertices $u_q,u'_q$, where $u_q$ and $u'_q$ are the closest points to the right and the left of $v_k$, respectively, in the total ordering of boundary vertices. See Figure~\ref{pseudo-separaor-penny}(a) for an illustration.
Later, we argue that if these edges do not exist, and the cycle separator intersects the corresponding triangulation edge, there must be other edges chosen in the maximal planar graph intersecting that triangulation edge, and hence this gives a contradiction.

In order to prove that $S_H(Aux_{\varepsilon}(G))$ is indeed a pseudo-separator, we need a property of triangulated graphs from~\cite{DBLP:conf/isaac/JainT19}.

\begin{lemma}[\cite{DBLP:conf/isaac/JainT19}]\label{lemma-triangle-penny}
Let $G$ be a triangulated planar graph and $S$ be a subset of its vertices. For every pair of vertex $u,v$ which belong to different components of $G\setminus S$, there exists a cycle in $G[S]$, such that $u$ and $v$ belong to different sides of this cycle.
\end{lemma}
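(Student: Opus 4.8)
Proof proposal for Lemma~\ref{lemma-triangle-penny} (the triangulated-graph property from~\cite{DBLP:conf/isaac/JainT19}).

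The plan is to exploit planar duality together with the fact that in a triangulated planar graph every face is bounded by a triangle. First I would fix a planar embedding of $G$ and consider the components $K_1, K_2, \ldots$ of $G \setminus S$; suppose $u \in K_1$ and $v \in K_2$ with $K_1 \neq K_2$. The goal is to exhibit a cycle in $G[S]$ separating $u$ from $v$ in the plane. The natural candidate is the boundary of the "territory" of $K_1$: take the union $T$ of $K_1$ together with all faces of $G$ incident only to vertices of $K_1 \cup S$ whose removal leaves $K_1$ connected to $u$ — more cleanly, take the connected region of the plane obtained as the union of $K_1$ and all open faces of $G$ that touch $K_1$. Its topological boundary is a closed curve (or union of curves) lying along edges of $G$, and I claim that boundary consists entirely of vertices and edges of $G[S]$.

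The key step is to argue that every edge on this boundary has both endpoints in $S$. Here is where triangulation is used: each edge $e$ of $G$ lies on exactly two triangular faces. If $e$ is on the boundary of the region $T$ defined by $K_1$, then one of its two faces is inside $T$ and one is outside. The inside face touches $K_1$, so at least one of its three vertices is in $K_1$; if an endpoint of $e$ were also in $K_1$, then $e$ itself would be an edge of $K_1$ and both its faces would have been absorbed into $T$ (since a face incident to a $K_1$-vertex is in $T$), contradicting $e$ being on the boundary. If an endpoint of $e$ lay in some other component $K_j$, $j\neq 1$, then since the inside face touches $K_1$ and the outside face — wait, the inside face has a $K_1$ vertex and an endpoint of $e$, and those two are adjacent in $G$ (on the triangle), forcing them into the same component, contradiction. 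Hence both endpoints of $e$ lie in $S$. Since the boundary of a connected open planar region that is a union of faces is itself a disjoint union of closed walks in the graph, and every edge of it lies in $G[S]$, this boundary is a union of closed walks in $G[S]$; picking the closed walk that encircles $u$ (the one bounding the face-region containing $u$) and extracting a cycle from it gives the desired separating cycle, with $v$ necessarily on the other side since $v \in K_2$ lies in a face-region disjoint from $T$.

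I expect the main obstacle to be the topological bookkeeping: showing that the boundary of the face-region $T$ is genuinely a union of closed walks (not, say, a path with dangling edges) and that $u$ and $v$ end up strictly on opposite sides of the extracted cycle rather than one of them lying on it. The first point follows because $T$ is an open set that is a union of faces, so its frontier is a $1$-complex in which every vertex has even "boundary-degree"; the second point follows because $u \in K_1 \subseteq T$ is an interior vertex of $T$ while $v \in K_2$ is disjoint from $\overline{T}$ — in particular $v \notin S$ and $v$ is not on the boundary. A secondary subtlety is the case where $T$'s frontier has several closed walks (when $K_1$ has "holes" containing other components); then one chooses the specific closed walk bounding the connected piece of the complement of $T$ that contains $v$, and that walk is the cycle we want. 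All of this is purely combinatorial-topological and uses nothing beyond planarity and the all-triangular-faces hypothesis.
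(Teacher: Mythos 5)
First, a point of comparison: the paper does not prove Lemma~\ref{lemma-triangle-penny} at all --- it is imported from Jain and Tewari~\cite{DBLP:conf/isaac/JainT19} and used as a black box --- so there is no in-paper proof to measure your argument against. Judged on its own, your proof follows what is essentially the standard (and, as far as I can tell, the cited reference's) route: form the region $T$ swept out by the component $K_1$ containing $u$ together with all faces incident to it, and use the fact that every face is a triangle to force every frontier edge of $T$ into $G[S]$. That central step is argued correctly and completely: if a frontier edge had an endpoint in $K_1$, both of its faces would be absorbed into $T$; and if it had an endpoint in another component $K_j$, that endpoint would be adjacent, via the third vertex of the inner triangle, to a vertex of $K_1$, collapsing the two components. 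This is exactly where triangularity is needed, and you have isolated it.

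The one step that is not actually proved is the last one. The frontier of $T$ is an even plane subgraph of $G[S]$, but the closed walk bounding the face that contains $v$ need not be a simple cycle: it can have pinch vertices (a vertex of $S$ around which the incident faces alternate $T$, non-$T$, $T$, non-$T$), in which case ``that walk is the cycle we want'' fails as stated, and ``extracting a cycle from it'' does not obviously preserve the separation --- an arbitrary simple cycle pulled out of the walk may leave $u$ and $v$ on the same side. What rescues the argument is a general fact you should state and prove: if $p$ and $q$ lie in different faces of a plane graph $F$, then some simple cycle of $F$ separates them. (One clean proof: a generic arc $\gamma$ from $p$ to $q$ crosses some element $Z$ of the cycle space of $F$ an odd number of times, because the cycle space of $F$ is the cut space of the planar dual and some dual cut separates the two faces; decomposing $Z$ into edge-disjoint simple cycles, at least one of them is crossed an odd number of times by $\gamma$, and that one separates $p$ from $q$.) With that lemma applied to $F = \mathrm{fr}(T) \subseteq G[S]$, together with your (correct) observations that $u$ lies in the interior of $\overline{T}$ and $v$ lies outside $\overline{T}$ --- so that any arc between them crosses $\mathrm{fr}(T)$ and hence they sit in different faces of $F$ --- the proof is complete.
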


Next, we prove the Lemma about pseudo-separator.

\begin{lemma}[Pseudo-Separator Lemma]\label{lem:psep}
Let $G$ be a penny graph and $H(Aux_{\varepsilon}(G))$ be a vertex induced subgraph of $Aux_{\varepsilon}(G)$. The
graph $S_H(Aux_{\varepsilon}(G))$ is a $h^{1-\beta}$-pseudo-separator of $H(Aux_{\varepsilon}(G))$.
\end{lemma}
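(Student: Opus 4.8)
The plan is to show two things: first, that $S_H(Aux_{\varepsilon}(G))$ has the size claimed (implicitly $O(h^{1/2+\beta/2})$, though here we focus on the separation property), and second, that after deleting its vertices and all edges of $H(Aux_{\varepsilon}(G))$ that cross one of its edges, every remaining connected component has at most $h^{1-\beta}$ vertices. The first point follows immediately from Lemma~\ref{imai-penny} applied to $\hat H(Aux_{\varepsilon}(G))$: the separator $S$ has $O(h^{1/2+\beta/2})$ vertices, and each triangulation edge of the cycle contributes at most a constant number (four) of shield edges, so $|S_H(Aux_{\varepsilon}(G))|$ stays within the same order. The substance is the second point.

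For the separation property, I would argue by contradiction. Suppose $u$ and $v$ lie in the same connected component of $H(Aux_{\varepsilon}(G)) \setminus S_H(Aux_{\varepsilon}(G))$ after removing the crossed edges, but that they are in different components of $\hat H(Aux_{\varepsilon}(G)) \setminus S$. By Lemma~\ref{lemma-triangle-penny}, there is a cycle $\gamma$ in $\hat H(Aux_{\varepsilon}(G))[S]$ with $u$ and $v$ on opposite sides. Now take the assumed $H \setminus S_H$-path $P$ from $u$ to $v$: it must cross $\gamma$ (topologically, since $u,v$ are on opposite sides of the closed curve $\gamma$ in the plane). Since $P$ avoids all vertices of $S_H(Aux_{\varepsilon}(G)) \supseteq S$, it cannot pass through a vertex of $\gamma$, so some edge $e$ of $P$ must cross some edge $e'$ of $\gamma$ in the drawing. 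If $e'$ is an original edge of $Aux_{\varepsilon}(G)$, then $e'\in E_2$ and $e$ crosses $e'$, so $e$ was removed in forming $H\cdot S_H$ — contradiction. The remaining case is that $e'$ is one of the artificial edges added during triangulation.

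Here is where the \emph{shield} does its job, and this is the step I expect to be the main obstacle. If $e' = (u_k, v_k)$ is a triangulation edge crossed by $e = (v_a, v_b)$, then by the crossing characterization the endpoints interleave on the boundary, i.e., exactly one of $v_a, v_b$ lies strictly between $u_k$ and $v_k$ on one arc. I would argue that such an $e$ must then cross one of the (at most four) shield edges of $e_k$: say the shield edge from $u_k$ to its nearest boundary neighbour $v_p$ on the side where $e$ enters the ``gap'' near $u_k$, or the shield edge from $v_k$ to its nearest neighbour $u_q$. Since the shield edges are drawn arbitrarily close to the boundary on either side of the endpoints of $e_k$, any edge that separates one endpoint of $e_k$ from the other (hence crosses $e_k$) and whose own endpoints are boundary vertices must cross at least one of these four shield edges — unless one of $e$'s endpoints actually coincides with a shield endpoint or with $u_k$ or $v_k$, but those vertices lie in $S_H$, which $P$ avoids. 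But all shield edges are in $E_2$ by construction, so $e$ is removed in $H\cdot S_H$ — again a contradiction. Thus no such path $P$ exists, so $u$ and $v$ lie in different components of $H\cdot S_H$ whenever they lie in different components of $\hat H \setminus S$; since the latter components have size $O(h^{1-\beta})$ and deleting more edges only shrinks components, every component of $H\cdot S_H$ has at most $O(h^{1-\beta})$ vertices.

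The delicate part throughout is the purely combinatorial-topological bookkeeping: matching up the four orderings in the crossing definition with the geometric picture of ``$e'$ drawn near the boundary,'' and checking that the four shield edges genuinely block every edge that could cross a triangulation edge, including the cases arising from the arbitrary directions assigned to triangulation and arc edges. I would handle the direction issue by noting that crossings depend only on the cyclic order of endpoints on the boundary, not on orientation, so it suffices to verify one representative ordering and invoke symmetry for the rest, exactly as Lemma~\ref{claim-penny} does. With that, the proof closes.
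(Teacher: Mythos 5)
Your overall strategy is the same as the paper's: apply Lemma~\ref{lemma-triangle-penny} to get a cycle of $S$ separating $u$ from $v$, observe that a path surviving in $H\cdot S_H(Aux_{\varepsilon}(G))$ would have to cross that cycle without touching a vertex of $S$, and then split into the case of an original edge of $Aux_{\varepsilon}(G)$ (immediate) versus an artificial triangulation edge. The gap is in how you dispose of the second case. You assert that any edge $e$ crossing a triangulation edge $e_k=(u_k,v_k)$ must cross one of the four shield edges because these are ``drawn arbitrarily close to the boundary.'' That is not what the construction gives you: the shield edges run from $u_k$ and $v_k$ to their \emph{nearest neighbours in $Aux_{\varepsilon}(G)$} on either side, and the construction only promises ``at most four'' of them --- some may simply not exist, and even when they do, an edge $e$ both of whose endpoints lie strictly beyond $v_p,v_p',u_q,u_q'$ in the boundary ordering interleaves with $u_k,v_k$ (hence crosses $e_k$) while interleaving with none of the four shield-edge pairs, so it crosses no shield edge. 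Your only escape hatch (``unless an endpoint of $e$ coincides with a shield endpoint or with $u_k,v_k$'') does not cover this configuration.

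This uncovered configuration is precisely where the paper's proof does its real work, and that argument is absent from your proposal. The paper observes that such an $e$ cannot belong to the maximal planar subgraph $H(Aux_{\varepsilon}(G))$ (it crosses a triangulation edge, which lies inside a face of that subgraph), takes the maximum-width edges $e_o,e_o'$ of the maximal planar subgraph on either side of the triangulation edge, and applies Lemma~\ref{lemma-penny-1} (the crossing-closure property of the auxiliary graph) to produce an edge $e_g$ of $Aux_{\varepsilon}(G)$ of strictly larger width that still crosses the triangulation edge; any maximal-planar edge excluding $e_g$ would contradict the maximality of the width of $e_o,e_o'$. This contradiction is what rules out a path edge slipping across an improperly shielded triangulation edge, and it is the step your proof needs but does not supply. (Your size accounting and the final ``components of $H\cdot S_H$ are contained in components of $\hat H\setminus S$'' conclusion are fine and match the paper.)
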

\begin{proof}
Let $S$ be a set of vertices chosen from $\hat{H}(Aux_{\varepsilon}(G))$ by using Lemma~\ref{imai-penny}. The claim is that if two vertices $u$ and $v$ belong to two different components in $\hat{H}(Aux_{\varepsilon}(G))\setminus S$, then a path between $u$ and $v$ in $G$ either takes a vertex of $S_H(Aux_{\varepsilon}(G))$ or crosses an edge of $S_H(Aux_{\varepsilon}(G))$. Note that due to Lemma~\ref{lemma-triangle-penny}, we know that $u$ and $v$ are on two different sides of a cycle of $S$. Thus, any edge drawn in the plane connecting $u$ and $v$ (not necessarily a straight line edge) crosses the cycle of $S$. If the path crosses non-triangulation edge or an edge which is shielded properly (see Figure~\ref{pseudo-separaor-penny}(a)), we are done as the path will intersect an edge of $S_H(Aux_{\varepsilon}(G))$.
The other situation is when such a path intersects a triangulation edge. We prove by contradiction that this can not happen. Assume that it does happen (see Figure~\ref{pseudo-separaor-penny}(b)). Let $e_r$ be the triangulation edge intersected by the edge $e_b$ of the path from $u$ to $v$. Now, since $e_b$ is not chosen in the maximal planar graph, this means there exist other edges chosen in the $H(Aux_{\varepsilon}(G))$. Let $e_o$ and $e_o'$ be two such edges with maximum width on either side of the triangulated edge (see the orange edges in Figure~\ref{pseudo-separaor-penny}(b)). By Lemma \ref{lemma-penny-1}, there exists an edge $e_g$ from the tail of $e_o$ to the head of $e_o'$
(the green path in Figure~\ref{pseudo-separaor-penny}(b)). Since this edge $e_g$ also crosses the triangulation edge $e_r$, it is not present in the maximal planar graph. Any edge that is present in maximal planar graph and crosses $e_g$ contradicts the fact that $e_o$ and $e_o'$ are of maximum width. There are other cases when either $e_o$ or $e_o'$ is not present, but they can be handled similarly.
Thereby concluding the proof of the Lemma.
\end{proof}

\begin{figure}[ht]
\centering
\includegraphics{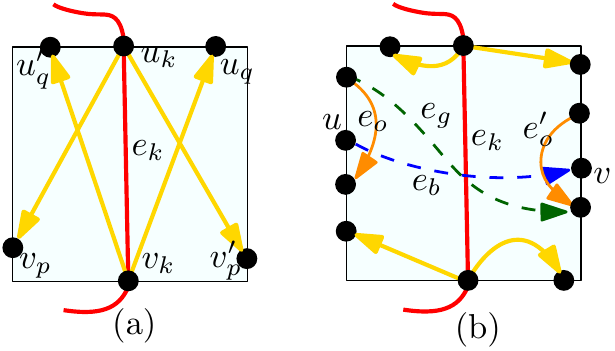}
\caption{An illustration of the pseudo-separator. The red edge is an edge of the triangulation and part of the red cycle separator. The yellow edges are chosen to form the shield (a) when the proper shield exists (b) otherwise.}
\label{pseudo-separaor-penny}
\end{figure}

\subsection{The Algorithm} Let $H$ be a vertex induced subgraph of an auxiliary graph. We first explain how to solve \textsc{Reachability} in $H$. Initially, $H$ is the whole auxiliary graph, and we wish to find the \textsc{Reachability} between given two vertices $s$ and $t$ of $H$.

By using Lemma \ref{lem:psep}, we find a pseudo-separator $S$ of $H$. W.l.o.g., assume that $s$ and $t$ are both in $S$. The pseudo-separator divides the graph into components $C_1, C_2, \ldots, C_k$, for $k\in [h]$. We maintain a set
of vertices $M$ that we call \emph{marked} vertex set. We use an array of size at most $\lvert S \rvert$ to mark a set of vertices $M_1$ of the pseudo-separator. Additionally, for each edge of the pseudo-separator we have at most one associated vertex, and they form the set $M_2$. Then, $M=M_1\cup M_2$ is called a \emph{marked} vertex set.

Throughout the algorithm, we maintain that if a vertex is marked then there is a path from $s$ to that vertex in the auxiliary graph. Initially, only the vertex $s$ is marked. We then perform $h$ iterations. In each iteration, we update the set of marked vertices as follows:

\noindent\textbf{Step~1.} For every vertex of $S$, we mark it if there is a path from an already marked vertex to it such that the internal vertices of that path all belong to only one component $C_i$, for $i\in [h]$. We check this by recursively running our algorithm on the subgraph of $H$ induced by the vertex set $C_i$.

\noindent\textbf{Step~2.} For each edge $e$ of $S$, the algorithm sets the associated marked vertex to $u$ if the following three conditions are satisfied: (a) There exists an edge $e'= (u, v)$ which crosses $e$, (b) there exists a path from a marked vertex to $u$ such that the internal vertices of that path all belong to only one component $C_i$ (we check this by recursively running our algorithm on the vertex induced subgraph of $C_i$), (c) $e'$ is the closest such edge to $u$.

Finally, we output 'YES' if and only if $t$ is marked at the end of these iterations.

\subsubsection{Correctness}
Let $P$ be a path from $s$ to $t$ in $H$. Suppose $P$
passes through the components $C_{\sigma_1}, C_{\sigma_2}, \ldots, C_{\sigma_l}$ (each $\sigma_i$ is a value in $[k]$)
in this order. By the definition of a pseudo-separator, the path can go from one component, $C_{\sigma_i}$ to the next component $C_{\sigma_{i+1}}$ in the following ways only:

\noindent\textbf{Case 1:} The path exits $C_{\sigma_i}$ and enters $C_{\sigma_{i+1}}$ through a vertex $w$ of the pseudo-separator.

\noindent\textbf{Case 2:} The path exits $C_{\sigma_i}$ and enters $C_{\sigma_{i+1}}$ through an edge $e' = (u, v)$ whose tail is in $C_{\sigma_i}$ and head is in $C_{\sigma_{i+1}}$. This edge will cross an edge $e = (x, y)$ of the pseudo-separator.

We see that after the $i$-th iteration, our algorithm will traverse the fragment of the path in the component $C_{\sigma_i}$ and either mark in \textbf{Case 1} its endpoint or in \textbf{Case 2} a vertex $u'$ such that the edge $(u', v)$ exist. Thus, $t$ will be marked after $l$ iterations if and only if there is a path from $s$ to $t$ in $H$. Since $l$ can be at most $h$, these many iterations would suffice.

From the above discussion, we know how to solve \textsc{Reachability} on the auxiliary graph with the desired space-bound. To solve the problem on the input graph, we find the rectangular subdivision of the input graph. Then, we form the auxiliary graph by solving each rectangle recursively. Next, we solve the \textsc{Reachability} problem on the auxiliary graph by following the above procedure. We have shown earlier that the reachability information of the input graph is fully preserved in the auxiliary graph. Hence, it is possible to report the solution of the input directed penny graph from the auxiliary graph's solution.

\subsubsection{Space-Complexity} The space complexity of our algorithm is dominated by the space required to store the \emph{marked} vertices. Since there can be only $h^{1/2 + \varepsilon/2}$ such vertices, we need $O(h^{1/2 + \varepsilon/2}\log n)$ space. It is easy to see that for every $\varepsilon' > 0$, there exists an $\varepsilon > 0$ such that $O(h^{1/2 + \varepsilon/2}\log n) = O(h^{1/2+\varepsilon'})$
Note that, for the input auxiliary graph $H$, our algorithm recurses on vertex induced subgraphs whose size is $O(h^{1-\varepsilon})$. Hence, the depth of recursion is bounded by a constant. This increases the space required by at most a constant factor. Since the number of vertices in the initial auxiliary graph was itself $O(n^{1/2 + \varepsilon/2})$, we get the desired space bound.

From the correctness and the space-complexity analysis of the algorithm, we conclude the proof of Theorem~\ref{th:penny}

\section{Chordal Graphs}\label{chordal}

In this section, we study the \textsc{Reachibility} problem on directed chordal graphs and design a space-efficient algorithm. A graph is said to be chordal if every cycle of length at least four has a \emph{chord}, which is an edge joining two vertices that are not adjacent on the cycle.
A \emph{directed} chordal graph is a graph whose underlying undirected graph is chordal.
See~\cite{dirac, fulkerson, rose1976algorithmic} for the fundamental theory on chordal graphs. We adopt the algorithm of Gilbert~et al.~\cite{gilbert1984separator} for \emph{Vertex Separator} and analyze to obtain the desired space-bound. In~\cite{DBLP:conf/isaac/JainT19}, it was noted that \textsc{Reachibility} on chordal graphs could be solved in a space-efficient manner, however, without any formal explanation. Here we provide a detailed analysis of this claim.

Let $G=(V,E)$ be a directed chordal graph.
For completeness, we will state some of the known results that are relevant to us. Chordal graphs are also known as \emph{triangulated graphs, monotone transitive graphs, rigid circuit graphs, perfect elimination graphs} in the literature.

\paragraph*{\textbf{Finding Separator.}}
Gilbert et al.~\cite{gilbert1984separator} presented $O(mn)$-time algorithm for finding \emph{Vertex Separator} on chordal graphs. Furthermore, a better algorithm of time complexity $O(m)$ is also shown. We design a space-efficient algorithm for finding a vertex separator that uses $O(m^{1/2}\log n)$ space and polynomial time.
We adopt the $O(mn)$-time algorithm from~\cite{gilbert1984separator} and analyze it to provide the desired space-bound. In order to do that, we allow time complexity that is much larger than $O(mn)$ but remains polynomial in $n$. We proceed with the following Lemma: A similar property is somewhat implicitly used by~\cite{gilbert1984separator}.

\begin{lemma}
\label{lem:chordalmain}
Let $G$ be a chordal graph and let $C$ be a clique. Let $A$ be the largest component in $G \setminus C$. Then, either of the following two statements is true:
\begin{itemize}
\item There exists a vertex $v$ in $C$ which is not adjacent to any vertex in $A$.
\item There exists a vertex $u$ in $A$ which is adjacent to every vertex in $C$.
\end{itemize}
\end{lemma}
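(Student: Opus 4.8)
The plan is to reduce to the induced subgraph $G' = G[A \cup C]$, which is again chordal, to assume that the first alternative fails (i.e.\ every vertex of $C$ has a neighbour in $A$), and to prove the second alternative by induction on $|A|$. I would first record that under this assumption $G'$ is connected: $C$ is a clique, $G[A]$ is connected, and there is at least one edge between $A$ and $C$. Note that proving the second alternative for $G'$ is the same as proving it for $G$, since all adjacencies involved lie inside $G[A\cup C]$, and $A$ is a component of $G'\setminus C$.

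The base case $|A| = 1$ is immediate: writing $A = \{u\}$, failure of the first alternative says exactly that every $c \in C$ is adjacent to $u$. For the inductive step with $|A| \ge 2$: if $A \cup C$ induces a clique, any $u \in A$ works; otherwise $G'$ is a non-complete chordal graph, so by Dirac's theorem~\cite{dirac} it has two non-adjacent simplicial vertices, and since $C$ is a clique at least one of them --- call it $x$ --- lies in $A$. If $x$ is adjacent to every vertex of $C$, take $u = x$. Otherwise I would pass to $G'' = G[(A \setminus \{x\}) \cup C]$ and apply the induction hypothesis to it.

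Making this step go through rests on two observations, both exploiting that $x$ is simplicial in $G'$, so that $N_{G'}(x)$, and hence its subset $N_A(x) := N_{G'}(x) \cap A$, is a clique (and $N_A(x) \neq \emptyset$, since $x$ has a neighbour in the connected graph $G[A]$ and $|A|\ge 2$). First, $G[A \setminus \{x\}]$ remains connected: any walk in $G[A]$ between two vertices of $A \setminus \{x\}$ that visits $x$ enters and leaves it through two members of the clique $N_A(x)$, which are adjacent, so the detour through $x$ can be short-circuited; thus $A \setminus \{x\}$ is the unique, hence largest, component of $G'' \setminus C$. Second, the first alternative still fails for $G''$: if some $c_0 \in C$ had $x$ as its only neighbour in $A$, then $c_0 \in N_{G'}(x)$, so $c_0$ would be adjacent to every vertex of the clique $N_{G'}(x)$, in particular to some $a \in N_A(x) \subseteq A \setminus \{x\}$, contradicting that $x$ is $c_0$'s only neighbour in $A$. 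Hence the induction hypothesis applied to $G''$ (chordal, clique $C$, largest component $A \setminus \{x\}$ of $G''\setminus C$, first alternative failing) delivers a vertex $u \in A \setminus \{x\} \subseteq A$ adjacent to all of $C$.

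The hard part is precisely this last step: a priori, invoking the lemma on the smaller graph $G''$ could return the first alternative rather than the second, which would be useless; the argument above is exactly what rules that out, and likewise I must make sure that deleting the simplicial vertex $x$ does not split $A$ into several components --- both points being handled by the fact that $N_{G'}(x)$ is a clique. The two degenerate cases ($|A| = 1$, and $A \cup C$ complete) should be checked separately as above, but are trivial.
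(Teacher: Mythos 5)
Your proof is correct, and it takes a genuinely different (and somewhat more self-contained) route than the paper. Both arguments rest on the same engine --- repeatedly extracting a simplicial vertex of $G[A\cup C]$ lying outside the clique $C$ (your appeal to Dirac's theorem is interchangeable with the paper's Lemma~\ref{lem:chordalauxiliary}) --- but they diverge in how they certify that some vertex of $A$ sees all of $C$. The paper builds a full perfect elimination ordering $u_0,\dots,u_k$ of $A$, takes the \emph{last} eliminated vertex $u_k$, and invokes the fill-in characterization (Lemma~\ref{lem:chordalaux2}): any $x\in C$ is joined to $u_k$ by a path through earlier-eliminated vertices of $A$, so $\{x,u_k\}$ would be a fill-in edge, which a perfect elimination ordering forbids; hence it is a real edge. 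You instead induct on $\lvert A\rvert$, deleting one simplicial vertex $x\in A$ at a time and maintaining two invariants --- that $A\setminus\{x\}$ stays connected and that every vertex of $C$ retains a neighbour in $A\setminus\{x\}$ --- both of which follow from $N_{G'}(x)$ being a clique. Your version buys independence from the fill-in machinery (Lemma~\ref{lem:chordalaux2} becomes unnecessary for this result), at the price of explicitly checking connectivity after each deletion and ruling out the ``useless'' first alternative in the inductive call, which the paper's global argument sidesteps by only ever using connectivity of the original $A$. Both handle the degenerate cases correctly; the proofs are equally valid.
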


We need several definitions and structural properties of the Chordal graphs in order to prove this Lemma. However, some of the properties are well-known results for chordal graphs.
We proceed with the following definitions.

\begin{definition}
Let $G$ be a graph and $v$ be a vertex of $G$. The deficiency of $v$, denoted by $D(v)$ is defined as follows: $D(v) = \{\{u_1, u_2\} \mid \{u_1, v\} \in E(G), \{u_2, v\} \in E(G) \textsf{ and }\{u_1, u_2\}\notin E(G)\}$
\end{definition}

\begin{definition}
Let $G$ be a graph and $v$ be a vertex of $G$. We define the graph $G_v$ as follows: $G_v = (V(G) \setminus \{v\}, E(G[V(G)\setminus\{v\}]) \cup D(v))$. We say that the graph $G_v$ is formed by eliminating $v$ from $G$.
\end{definition}

\begin{definition}
When a sequence of vertices is eliminated from a graph, the edges in the deficiencies that are added are called \emph{fill-in} edges. A \emph{simplicial} vertex of a graph is a vertex that has a deficiency of $0$.
\end{definition}

We state the following known facts about the chordal graphs. The following Lemmas are due to \cite{dirac, fulkerson, rose1972graph, rose1970triangulated, rose1976algorithmic}. For the sake of completion, we state them here in the form that we will be using and provide their proofs.

\begin{lemma}\label{lem:sepcliqueab}
Let $G$ be a chordal graph and $a$ and $b$ be two vertices in $V(G)$. Let $S$ be a set of vertices of $G$ such that: 1) $a$ and $b$ are in different components of the graph $G\setminus S$;
2) there exists no proper subset $S'\subseteq S$ such that vertices $a,b$ are in different components of the graph $G\setminus S'$.
If these conditions hold, then the set $S$ forms a clique in $G$.
\end{lemma}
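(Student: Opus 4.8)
The plan is to run the classical argument for minimal vertex separators in chordal graphs. First I would fix the two connected components $C_a$ and $C_b$ of $G \setminus S$ that contain $a$ and $b$ respectively; these are genuinely distinct components, since $S$ separates $a$ from $b$. The key preliminary observation is that, by the minimality hypothesis (condition 2), every vertex $v \in S$ has a neighbor in $C_a$ and a neighbor in $C_b$. Indeed, $S \setminus \{v\}$ is a proper subset of $S$, so it does not separate $a$ from $b$; hence there is an $a$--$b$ path $Q$ avoiding $S\setminus\{v\}$, and $Q$ must use $v$ because $S$ does separate $a$ from $b$. The portion of $Q$ from $a$ up to (but not including) $v$ lies in $G\setminus S$ and is connected to $a$, hence lies in $C_a$, so its last vertex is a neighbor of $v$ in $C_a$; symmetrically $v$ has a neighbor in $C_b$.

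Next I would assume for contradiction that $S$ is not a clique and pick $u, v \in S$ with $\{u,v\}\notin E(G)$. Taking a neighbor of $u$ in $C_a$ and a neighbor of $v$ in $C_a$ and connecting them within the connected set $C_a$, we obtain a $u$--$v$ path all of whose internal vertices lie in $C_a$; let $P_a$ be a shortest such path. A standard shortcut argument shows $P_a$ is chordless: an edge between two non-consecutive vertices of $P_a$ would give a strictly shorter $u$--$v$ path still internal to $C_a$, and such an edge cannot be $\{u,v\}$ itself, which does not exist. Since $u$ and $v$ are non-adjacent, $P_a$ has length at least $2$, so it has at least one internal vertex, lying in $C_a$. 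Symmetrically let $P_b$ be a shortest $u$--$v$ path with internal vertices in $C_b$; it is chordless and of length at least $2$.

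Now I would concatenate $P_a$ and $P_b$ to form the closed walk $Z$ that goes from $u$ through $C_a$ to $v$ and back through $C_b$ to $u$. The internal vertices of $P_a$ lie in $C_a$, those of $P_b$ lie in $C_b$, and $u,v\in S$, while $S$, $C_a$, $C_b$ are pairwise disjoint; hence all vertices on $Z$ are distinct and $Z$ is a cycle of length $|P_a|+|P_b|\ge 4$. Since $G$ is chordal, $Z$ has a chord, and I would rule out every possibility: a chord with both endpoints in $V(P_a)$ is a chord of the chordless path $P_a$ (it is not $\{u,v\}$, which is a non-edge), a contradiction; symmetrically for $V(P_b)$; and a chord joining an internal vertex of $P_a$ (which lies in $C_a$) to an internal vertex of $P_b$ (which lies in $C_b$) is impossible because $S$ separates $C_a$ from $C_b$. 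These are all the cases, since along $Z$ the only vertices shared by the ``$P_a$ part'' and the ``$P_b$ part'' are $u$ and $v$. Thus $Z$ has no chord, contradicting chordality, and therefore $S$ is a clique.

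I expect the only delicate point to be this final case analysis: confirming that $Z$ really is a cycle (all vertices distinct) of length at least $4$, and that the three listed categories --- chord internal to $P_a$, chord internal to $P_b$, chord crossing between $C_a$ and $C_b$ --- exhaust all chords. The remaining ingredients (the existence of the neighbors in $C_a$ and $C_b$ coming from minimality, and the shortcut argument giving chordlessness of $P_a$ and $P_b$) are routine.
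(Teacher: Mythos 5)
Your proof is correct and follows essentially the same route as the paper's: take the components $C_a$ and $C_b$, use minimality of $S$ to get neighbours of each separator vertex on both sides, join two non-adjacent separator vertices by shortest paths through $C_a$ and through $C_b$, and observe that the resulting cycle of length at least $4$ can have no chord other than the missing edge itself. The only difference is that you spell out the details (the minimality argument for the neighbours, the chordlessness of the shortest paths, and the exhaustive chord case analysis) that the paper leaves implicit.
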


\begin{proof}
Let $C_a$ and $C_b$ be the components in $G\setminus S$ containing $a$ and $b$, respectively. Note that, each vertex $s\in S$ is adjacent to some vertices in $C_a$, and some vertices in $C_b$. Consider two vertices $x,y\in S$. Let $\mathcal{P}_{a}$ be the shortest path between $x$ and $y$ in $G[C_a \cup \{x, y\}]$. Similarly, let $\mathcal{P}_b$ be the shortest path between $x$ and $y$ in $G[C_b \cup \{x, y\}]$. The paths $\mathcal{P}_a$ and $\mathcal{P}_b$ together forms a cycle.
Hence, there must be an edge between $x$ and $y$, since it is the only \emph{chord} that is possible. This argument holds for any pair of vertices in $S$. This completes the proof.
\end{proof}

\begin{lemma}
\label{lem:chordalauxiliary}
Let $G$ be a chordal graph and let $C$ be any clique of $G$. Then, either $G$ is a complete graph or there exists a vertex $v\in G \setminus C$ that is simplicial.
\end{lemma}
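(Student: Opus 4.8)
The plan is to prove Lemma~\ref{lem:chordalauxiliary} by a short induction on the number of vertices of $G$ lying outside $C$, using the classical fact (Dirac) that every chordal graph that is not complete has at least two non-adjacent simplicial vertices. First I would dispose of the base case: if $V(G)\setminus C=\emptyset$, then $G=C$ is complete and there is nothing to prove, so assume $G$ is not complete. By Dirac's theorem there exist two simplicial vertices $u_1,u_2$ of $G$ with $\{u_1,u_2\}\notin E(G)$. If at least one of $u_1,u_2$ lies in $G\setminus C$, that vertex is the desired simplicial vertex and we are done. The only remaining possibility is that both $u_1$ and $u_2$ lie in $C$; but $C$ is a clique, so any two of its vertices are adjacent, contradicting $\{u_1,u_2\}\notin E(G)$. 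Hence a simplicial vertex outside $C$ always exists.

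The one point that needs care — and the main (mild) obstacle — is justifying the ``two non-adjacent simplicial vertices'' fact in the exact form needed, since the excerpt only cites it implicitly through \cite{dirac, fulkerson, rose1972graph, rose1970triangulated, rose1976algorithmic}. I would either invoke it directly as a known structural property of chordal graphs, or derive it quickly: take a perfect elimination ordering $v_1,\dots,v_n$ of $G$; then $v_1$ is simplicial, and a perfect elimination ordering of $G$ ``reversed at the far end'' (or, more cleanly, apply the fact that in the clique tree / to a non-complete chordal graph at least two leaves of a clique tree each contribute a simplicial vertex) yields a second simplicial vertex not adjacent to the first. For the purposes of this paper, citing it as a known fact about chordal graphs (consistent with how the preceding Lemmas are treated) is entirely appropriate, so I would keep the argument to the three lines above.

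Concretely, the proof I would write is: \emph{Proof.} If $V(G)=V(C)$ then $G=C$ is complete and we are done; so suppose $G$ is not complete. By a well-known property of chordal graphs (see \cite{dirac}), $G$ has two non-adjacent simplicial vertices $u_1$ and $u_2$. Since $C$ is a clique, $u_1$ and $u_2$ cannot both lie in $C$ (they would then be adjacent). Hence at least one of them, say $v$, lies in $G\setminus C$, and $v$ is simplicial in $G$. \qed

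Finally, I would flag that this lemma is the inductive engine behind Lemma~\ref{lem:chordalmain}: repeatedly eliminating such a simplicial vertex $v$ outside $C$ keeps $C$ a clique, keeps the graph chordal, and shrinks $G\setminus C$, which is exactly the setup one iterates to locate either a vertex of $C$ missing all of the largest component $A$ or a vertex of $A$ universal to $C$. So the statement is deliberately lightweight, and no heavy calculation is expected — the whole content is the parity argument that a non-complete chordal graph cannot hide both of its guaranteed non-adjacent simplicial vertices inside a single clique.
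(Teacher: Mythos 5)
Your proof is correct, but it takes a genuinely different route from the paper. The paper proves the lemma by induction on $\lvert V(G)\rvert$: it picks two non-adjacent vertices $a,b$, uses its own Lemma~\ref{lem:sepcliqueab} to obtain a minimal $a$--$b$ separator $S$ that is a clique, observes that $C\setminus S$ lies entirely in one component $C_a$, and applies the induction hypothesis to the smaller chordal graph $G[S\cup C_b]$ (with $S$ playing the role of the clique) to extract a simplicial vertex $u\notin S$ whose neighbourhood is confined to $C_b\cup S$, hence simplicial in $G$ and outside $C$. You instead invoke Dirac's theorem that a non-complete chordal graph has two non-adjacent simplicial vertices and finish with the one-line observation that a clique cannot contain both. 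Your argument is shorter and entirely sound, and the downstream uses of the lemma (the iterated elimination in Lemma~\ref{lem:chordalmain}) only need the statement as given, so nothing breaks. The trade-off is that you import a black-box structural theorem whose standard proof is essentially the induction the paper carries out, whereas the paper's version is self-contained and reuses Lemma~\ref{lem:sepcliqueab}, consistent with its stated goal of supplying full proofs of these classical facts. One small caution: your parenthetical sketch of deriving Dirac's theorem by ``reversing'' a perfect elimination ordering does not work as stated (the reverse of a PEO is generally not a PEO, and $v_n$ need not be simplicial); the clique-tree argument you also mention, or the minimal-separator induction, is the correct way to justify it if you do not simply cite it.
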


\begin{proof}
We prove this by induction on $\lvert G \rvert$. The base case $\lvert G \rvert = 1$ is trivial.
Let us assume that the Lemma holds for all $G$ such that $\lvert G \rvert \leq k$, for some $k\in [n]$.
Now, $G$ be a graph such that $\lvert G \rvert = k+1$.
If $G$ is not a complete graph then,
let $a,b$ be two vertices of $G$, that are not adjacent.
Due to Lemma~\ref{lem:sepcliqueab}, there exists a set $S$ that separates $a$ and $b$.
Let $C_a$, $C_b$, be the corresponding components $G$ containing $a$ and $b$, respectively.
Note that, the vertices in $C\setminus S$ should be in one component. W.l.o.g., assume that they belong to $C_a$.
Consider the graph $G_b = G[S \cup C_b]$. We have $\lvert G_b \rvert \leq k$. Hence, by induction, either $G_b$ is a clique or there is a vertex $u \notin S$, that is \emph{simplical} in $G_b$.
In either of the cases, there exists a vertex $u \notin S$ that is simplicial in $G_b$ since $G_b$ must contain at least one vertex that is not in $S$. Note, $u$ is a simplicial vertex in $G$ since $u$ is not adjacent to a vertex in any component other than $C_b$. This completes the proof.
\end{proof}

\begin{corollary}
Let $G$ be a chordal graph and $C$ be a clique. Let $A$ be the largest component in $G \setminus C$.
If $B$ is a non-empty subset of $A$,
then $B$ contains a vertex whose neighbours in $G[B\cup C]$ forms a clique.
\end{corollary}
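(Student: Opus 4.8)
The plan is to derive this corollary as an immediate consequence of Lemma~\ref{lem:chordalauxiliary}, applied not to $G$ itself but to the induced subgraph $G' := G[B \cup C]$. The key observation is that ``a vertex $v \in B$ whose neighbours in $G[B\cup C]$ form a clique'' is precisely ``a simplicial vertex of $G'$ lying outside $C$'' — exactly the kind of vertex that Lemma~\ref{lem:chordalauxiliary} produces.

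First I would note that $G'$ is again a chordal graph, since chordality is hereditary under taking induced subgraphs, and that $C \subseteq V(G')$ is still a clique of $G'$. Then I would invoke Lemma~\ref{lem:chordalauxiliary} on the pair $(G', C)$: either $G'$ is a complete graph, or there is a vertex $v \in V(G') \setminus C = B$ that is simplicial in $G'$. In the second case we are done immediately, as $v \in B$ and its neighbourhood in $G'$ is a clique by the definition of simpliciality. In the first case, since $G'$ is complete, the neighbourhood in $G'$ of every vertex of $B$ is a clique; as $B$ is non-empty by hypothesis, picking any vertex of $B$ finishes the argument.

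I do not expect a genuine obstacle here; the only points requiring a little care are (i) checking that $V(G') \setminus C$ really equals $B$, which uses $B \subseteq A \subseteq V(G) \setminus C$ so that $B$ and $C$ are disjoint, and (ii) handling the degenerate case where $G[B\cup C]$ is complete, where one must explicitly use that $B$ is non-empty. Note also that the hypothesis ``$A$ is the largest component of $G \setminus C$'' plays no role in this argument — the statement holds verbatim for any non-empty $B$ disjoint from $C$ — but it is harmless to keep it, since it matches the context in which the corollary will later be applied.
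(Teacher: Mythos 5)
Your proof is correct and matches the paper's intended derivation: the paper states this as an unproved corollary immediately after Lemma~\ref{lem:chordalauxiliary}, and the evident route (also the one implicitly used later in the proof of Lemma~\ref{lem:chordalmain}, where simplicial vertices are repeatedly extracted from $G[(A\cup C)\setminus\{u_0,\ldots,u_{i-1}\}]$) is exactly your application of that lemma to the induced subgraph $G[B\cup C]$. Your side remarks --- that $B$ and $C$ are disjoint, that the complete-graph case needs $B\neq\emptyset$, and that the ``largest component'' hypothesis is not actually used --- are all accurate.
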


\begin{lemma}[\cite{gilbert1984separator}]
\label{lem:chordalaux2}
Let $a_0, a_1, \ldots, a_k$ be an elimination ordering for a graph $G$. Let $v$ and $w$ be nonadjacent vertices of $G$. Then $\{v, w\}$ is a fill-in edge if and only if there is a path from $v$ to $w$ consisting of vertices that are eliminated earlier than both $v$ and $w$.
\end{lemma}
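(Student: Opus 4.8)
The plan is to prove both directions by relating the fill-in edge $\{v,w\}$ to the connectivity structure of the graph restricted to vertices that precede $v$ and $w$ in the elimination ordering. Throughout, for a vertex $u$, write $\mathrm{ord}(u)$ for its position in the ordering $a_0,\dots,a_k$, and say $u$ is \emph{earlier} than $v$ if $\mathrm{ord}(u) < \mathrm{ord}(v)$.

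First I would handle the ``if'' direction. Suppose there is a path $v = x_0, x_1, \ldots, x_r, x_{r+1} = w$ in $G$ such that every internal vertex $x_i$ (for $1 \le i \le r$) is earlier than both $v$ and $w$; choose such a path with the fewest internal vertices. Let $x_j$ be the internal vertex with the smallest order, i.e.\ the one eliminated first among $x_1,\dots,x_r$. When $x_j$ is eliminated, both its neighbours $x_{j-1}$ and $x_{j+1}$ on the path are still present (since $x_j$ was eliminated earliest among the internal vertices, and the endpoints $v,w$ come even later). Hence eliminating $x_j$ adds the edge $\{x_{j-1}, x_{j+1}\}$ to the graph (either it was already present, or it becomes a fill-in edge). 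Either way, after this elimination step there is a shorter path from $v$ to $w$ whose internal vertices are still all earlier than $v$ and $w$ (the new internal set is a subset of the old one minus $x_j$). By minimality of our chosen path, $r$ must have been $0$, which is impossible since $v$ and $w$ are nonadjacent in $G$ — unless the ``path'' of length one that remains is exactly the edge $\{v,w\}$, which is then a fill-in edge. I would phrase this as an induction on the number of internal vertices: the base case is a direct $v$--$w$ edge (a fill-in edge by hypothesis, since $v,w$ are nonadjacent in $G$), and the inductive step eliminates the earliest internal vertex to shorten the path while preserving the ``all internal vertices earlier'' property.

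For the ``only if'' direction, suppose $\{v,w\}$ is a fill-in edge, and let it be created at the step when some vertex $z$ is eliminated, so $z$ is earlier than both $v$ and $w$, and at that point both $v$ and $w$ are neighbours of $z$. I would argue by induction on the elimination step at which a fill-in edge is created. If $\{v,z\}$ and $\{z,w\}$ are both original edges of $G$, then $v,z,w$ is the desired path with single internal vertex $z$, which is earlier than $v$ and $w$. Otherwise at least one of $\{v,z\}$, $\{z,w\}$ is itself a fill-in edge, created at an earlier step than $z$'s elimination; by the induction hypothesis each such fill-in edge is realized by a path whose internal vertices are earlier than its two endpoints — and crucially each such internal vertex is earlier than $z$, hence earlier than $v$ and $w$. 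Concatenating the (possibly trivial) path from $v$ to $z$, the vertex $z$ itself, and the (possibly trivial) path from $z$ to $w$ yields a walk from $v$ to $w$ all of whose internal vertices are earlier than both $v$ and $w$; extracting a simple path from this walk finishes the argument.

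The main obstacle I anticipate is bookkeeping the ordering inequalities carefully in the ``only if'' direction: when a fill-in edge $\{v,z\}$ is produced at the elimination of some vertex $y \ne z$, the induction gives internal vertices earlier than $v$ and $z$, and I must confirm $y$ (and all those internal vertices) are earlier than $z$ and therefore earlier than both $v$ and $w$ — this relies on the fact that $z$ still has $v$ and $w$ as neighbours at the time $z$ is eliminated, so any edge incident to $z$ used in forming a later deficiency was created strictly before $z$'s turn. A secondary point of care is ensuring the concatenated walk can be pruned to a genuinely simple path without reintroducing $v$ or $w$ as internal vertices; since $v$ and $w$ appear only as the endpoints of the walk, standard shortcutting removes repeated vertices while keeping $v,w$ at the ends.
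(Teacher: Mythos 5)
The paper does not actually prove this lemma: it is imported verbatim from Gilbert et al.\ with a citation and used as a black box, so there is no in-paper argument to compare yours against. Your proof is the standard Rose--Tarjan--Lueker fill-in characterization and both directions are essentially sound. The ``only if'' direction is clean: strong induction on the elimination step at which the fill edge is created, splicing the paths realizing $\{v,z\}$ and $\{z,w\}$ through $z$, with the ordering bookkeeping you flag handled correctly (everything internal is earlier than $z$, hence earlier than $v$ and $w$, so $v$ and $w$ cannot recur as internal vertices and shortcutting to a simple path is safe). The one place to tighten is the ``if'' direction: you start from a path in $G$ and choose one with fewest internal vertices, but the shortened path produced by eliminating the earliest internal vertex $x_j$ may use the edge $\{x_{j-1},x_{j+1}\}$, which can be a fill-in edge rather than an edge of $G$ --- so the minimality contradiction as first phrased does not apply, and the induction you fall back on must be stated for the \emph{filled} graph $G^{+}$ (original plus fill-in edges): ``if there is a $v$--$w$ path in $G^{+}$ whose internal vertices are all earlier than both endpoints, then $\{v,w\}\in E(G^{+})$.'' With that strengthened hypothesis the step goes through (any $G^{+}$-edge incident to $x_j$ whose other endpoint is later than $x_j$ is present in the elimination graph when $x_j$ is removed, since fill edges are only ever added), and the conclusion that $\{v,w\}$ is specifically a \emph{fill-in} edge then follows from the assumed nonadjacency of $v$ and $w$ in $G$. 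This is a presentational fix, not a missing idea.
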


Now, we have the ingredients to prove the main Lemma~\ref{lem:chordalmain}.

\begin{proof}
Let $G$ be a chordal graph, and let $C$ be a clique. Let $A$ be the largest component in $G \setminus C$. Let us assume that each vertex $v\in C$ is adjacent to at least one vertex in $A$. We will show that under this assumption, there exists a vertex $u$ in $A$ that is adjacent to each vertex in $C$, thereby proving the Lemma.

It is known that a vertex-induced subgraph of a chordal graph is also a chordal graph. Hence, the subgraph $G_0 := G[A\cup C]$ is a chordal graph. Due to Lemma~\ref{lem:chordalauxiliary}, we know that there is a vertex $u_0\in G_0\setminus C$,
that is simplicial, whose neighbours form a clique. Let $G_1 := G[(A\cup C)\setminus \{u_0\}]$. Similarly, let $u_1$ be a vertex in $G_1\setminus C$ that is simplicial.
In general, let $u_i$ be a vertex in $G_i\setminus C$ whose neigbhours forms a clique, where $G_i$ is defined as $G[(A\cup C)\setminus \{u_0, u_1, \ldots, u_{i-1}\}]$.
Let $k$ be an integer such that
$G_{k+1} = G[C]$. We claim that $u_k$ is adjacent to every vertex of $C$.

Consider a vertex $x$ in $C$.
Since $A$ is connected and by our assumption, $x$ is adjacent to a vertex of $A$, there is a path from $x$ to $u_k$ in $G[A\cup C]$ that uses only vertices of $A$. Lemma~\ref{lem:chordalaux2} says that if $\{x,u_k\}$ is not an edge of
$G[A \cup C]$, then it is a fill-in edge. But a perfect elimination ordering has no fill-in, so $x$ is adjacent to $u_k$ in $G[A \cup C]$. Thus $u_k$ is adjacent to every vertex of $C$.
This completes the proof.
\end{proof}

Given Lemma ~\ref{lem:chordalmain}, the algorithm for finding a separator in a chordal graph is relatively straightforward, as also observed by~\cite{gilbert1984separator}.
The algorithm is described in Procedure~\ref{procedure}.

\begin{algorithm}
\SetKwFor{For}{for}{do}{endfor}
\SetKwFor{ForEach}{for each}{do}{endfor}
\SetKwIF{If}{ElseIf}{Else}{if}{then}{else if}{else}{endif}
\KwIn{A chordal graph $G$}
\KwOut{A separator $S$ of $G$ of size $\sqrt m$}
$S \gets \emptyset$\;
\While{there exists a component $A$ of $G \setminus S$ that has weight more than $n/2$}{
\While{there exists a vertex $x$ of $S$ that is adjacent to no vertex of $A$}{
$S \gets S \setminus \{x\}$\;
}
$v \gets$ a vertex of $A$ adjacent to every vertex of $S$\;
$S \gets S \cup \{v\}$\;
}
Return $S$\;
\caption{Vertex Separator in Chordal Graphs}
\label{procedure}
\end{algorithm}

The correctness of the algorithm directly comes from Lemma~\ref{lem:chordalmain}. We show that this algorithm can be implemented in $O(m^{1/2}\log n)$ space.

\subsection{Space-Complexity} To see that the above algorithm can be implemented in $O(m^{1/2}\log n)$-space and polynomial time, we first recall that the separator $S$ forms a clique in the graph. Since the size of a clique is upper-bounded by $m^{1/2}$, it will require at most $O(m^{1/2}\log n)$ space to store the clique.

Now, in order to implement the algorithm, we need to calculate the weights of each of the connected components of $G\setminus S$.
However, we cannot afford to store these components explicitly,
since the number of vertices present in these components could be large. Instead, we identify a component with the lowest-index vertex present in it. We call the lowest-index vertex a \emph{marker} of the component. In order to check that whether a vertex $v$ is a \emph{marker}, we run Reingold's undirected \textsc{Reachability} algorithm to see if it is not connected with any vertex $u$ in $G\setminus S$ such that the index of $u$ is lower than $v$. Also, to know the size of the connected component for a marker vertex $v$, we use Reingold's algorithm~\cite{reingold2008undirected} to count the number of vertices that are connected to it. Therefore it is possible to count the weight of any component of $G\setminus S$ in the desired space-bound. We conclude the following theorem.

\begin{theorem}\label{th-chordal}
Given a chordal graph $G=(V, E)$, there exists an algorithm that computes a $\sqrt{m}$ separator in polynomial time by using $O(m^{1/2}\log n)$-space.
\end{theorem}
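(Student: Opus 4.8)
The plan is to bundle together the two correctness ingredients already established — Lemma~\ref{lem:chordalmain}, which guarantees that Procedure~\ref{procedure} is well-defined and terminates with a genuine separator, and the structural fact (Lemma~\ref{lem:sepcliqueab}) that a minimal separator of a chordal graph is a clique — and then layer a careful space accounting on top of the procedure. First I would argue correctness: each iteration of the outer \texttt{while} loop removes from $S$ every vertex with no neighbour in the heavy component $A$, and then adds a vertex $v\in A$ adjacent to all of the remaining $S$; Lemma~\ref{lem:chordalmain} guarantees such a $v$ exists precisely when no removable vertex is left, so each iteration is executable. Since $v$ is added to $S$ and lies in the old $A$, the component structure strictly refines, and one shows (as in~\cite{gilbert1984separator}) that the loop runs at most a polynomial number of times before every component of $G\setminus S$ has weight at most $n/2$; at that point $S$ is a separator. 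Throughout, $S$ is a minimal $a,b$-separator for some pair, hence a clique, so $|S|\le\omega(G)\le m^{1/2}+O(1)$ because a clique on $k$ vertices contributes $\binom{k}{2}$ edges.

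Next I would turn to the space bound, which is the real content of the theorem. Storing $S$ costs $O(m^{1/2}\log n)$ bits since $|S|=O(m^{1/2})$ and each vertex name needs $O(\log n)$ bits. The only other quantity the procedure needs is, for each component $A$ of $G\setminus S$, its weight, and we must not materialize $A$. The idea is to represent a component by its lowest-index vertex, its \emph{marker}: vertex $v$ is a marker of its component iff there is no vertex $u$ with smaller index that is connected to $v$ in $G\setminus S$, which is an undirected reachability query and is decidable in $O(\log n)$ space by Reingold's theorem~\cite{reingold2008undirected}; and the weight of the component of a marker $v$ is computed by iterating over all vertices $u$, testing connectivity to $v$ in $G\setminus S$ via Reingold, and summing weights. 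Each such test reuses the same $O(\log n)$ workspace, so enumerating markers and computing component weights adds only $O(\log n)$ on top of the $O(m^{1/2}\log n)$ needed to hold $S$ — giving the claimed bound. I would also note that deleting $S$ from $G$ on the fly costs nothing extra, since membership in $S$ is a lookup into the stored clique.

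The part that needs the most care — and that I expect to be the main obstacle — is making precise that every subroutine call truly runs in $O(\log n)$ space \emph{on top of} the $O(m^{1/2}\log n)$ already committed to $S$, and in particular that the Reingold calls are on the graph $G\setminus S$ rather than $G$. Reingold's algorithm expects an input graph; here we feed it $G$ together with the predicate ``$x\notin S$'', evaluated by scanning the stored list of $\le m^{1/2}$ separator vertices, so each adjacency/vertex query to the virtual graph $G\setminus S$ is answered in $O(m^{1/2}\log n)$ time and $O(\log n)$ additional space. Composing a logspace machine (Reingold) with such a logspace-computable oracle for its input is standard, but it must be stated, because a naive implementation that copies $G\setminus S$ into working memory would blow the budget. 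Once this composition is spelled out, the polynomial time bound follows because the outer loop runs polynomially often, each iteration scans $O(n)$ candidate vertices, and each connectivity test is polynomial time; and the space bound is the maximum of $O(m^{1/2}\log n)$ for $S$ and $O(\log n)$ for the transient computations, which is $O(m^{1/2}\log n)$, completing the proof of Theorem~\ref{th-chordal}.
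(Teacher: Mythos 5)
Your proposal follows essentially the same route as the paper: correctness of Procedure~\ref{procedure} via Lemma~\ref{lem:chordalmain}, the bound $|S|=O(m^{1/2})$ from $S$ being a clique, and component weights computed without materializing components by using lowest-index markers together with Reingold's logspace connectivity on the (implicitly represented) graph $G\setminus S$. The one quibble is that the clique property of $S$ is better justified as a loop invariant (each vertex $v$ added is adjacent to every vertex of the current $S$, and subsets of cliques are cliques) rather than by appeal to Lemma~\ref{lem:sepcliqueab}, since the $S$ maintained by the algorithm need not be a minimal $a,b$-separator at every stage.
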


From Theorem~\ref{th-chordal} and Theorem~\ref{treewidth}, we have the following corollary.

\begin{corollary}
There exists an algorithm that solves the \textsc{Reachability} problem for chordal graphs by using $O(m^{1/2}\log n)$-space and polynomial time.
\end{corollary}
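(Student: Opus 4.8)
The plan is to obtain the corollary by feeding the separator construction of Theorem~\ref{th-chordal} into the generic reduction of Theorem~\ref{treewidth}. Concretely, I would instantiate Theorem~\ref{treewidth} with the class $\mathcal{G}$ of directed chordal graphs and the bound $w(n,m)=c\,m^{1/2}$ for a suitable constant $c$. The corollary then follows as soon as the hypothesis of Theorem~\ref{treewidth} is checked, namely: for every directed chordal graph $G$ on $n$ vertices and $m$ edges and every subset $U\subseteq V(G)$, one can output in polynomial time and $O(m^{1/2}\log n)$ space a separator of $U$ in the underlying undirected graph of $G$ of size $O(m^{1/2})$.

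The first step is to reduce the ``separator of an arbitrary subset $U$'' requirement to the (weighted) separator already produced by Procedure~\ref{procedure}. Since $G$ is a directed chordal graph, its underlying undirected graph $G'$ is chordal, and $G'$ is exactly the graph on which the separator must act. I would use the weight function $w(v)=1/|U|$ for $v\in U$ and $w(v)=0$ otherwise; this is a nonnegative function whose total weight is $1$, and by definition a separator of $G'$ with respect to $w$ is precisely a separator of $U$. Running Procedure~\ref{procedure} on $G'$ with this weight function (so that the while-loop test ``weight more than $n/2$'' reads ``$w$-weight more than $1/2$'') yields a set $S$ that is a clique of $G'$ with every component of $G'\setminus S$ of $w$-weight at most $1/2$; a standard bucketing of those components into two groups gives the partition $V(G')=S\cup V_1\cup V_2$ with $w(V_1),w(V_2)\le 2/3$, i.e.\ $|V_1\cap U|,|V_2\cap U|\le \tfrac23|U|$. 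Because $S$ is a clique and $G'$ has at most $m$ edges, $|S|=O(m^{1/2})$, which is the required size bound; correctness of each iteration is exactly Lemma~\ref{lem:chordalmain}.

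The second step is to re-verify the $O(m^{1/2}\log n)$ space bound, which is just the analysis behind Theorem~\ref{th-chordal} carried out in the weighted setting: storing the clique $S$ costs $O(m^{1/2}\log n)$ bits, each component of $G'\setminus S$ is represented by its lowest-indexed vertex (its \emph{marker}), and the $w$-weight of a component is obtained by using Reingold's undirected \textsc{Reachability} algorithm to count the vertices of $U$ connected to that marker, all within $O(\log n)$ additional space. Plugging $w(n,m)=O(m^{1/2})$ into Theorem~\ref{treewidth} then produces the desired $O(m^{1/2}\log n)$-space, polynomial-time algorithm for \textsc{Reachability} on directed chordal graphs.

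The only point needing care is the interface mismatch: Theorem~\ref{th-chordal} is phrased for a separator of the whole graph under the uniform weight, whereas Theorem~\ref{treewidth} asks for a separator of an arbitrary subset $U$. The zero/uniform weight function above bridges this, but one has to make sure that Procedure~\ref{procedure} and its space analysis are genuinely weight-oblivious (the balance threshold becomes a weight threshold, and the component-size counts become counts of $U$-vertices). I expect this to be the main obstacle; everything else in the argument is a direct invocation of the two cited theorems.
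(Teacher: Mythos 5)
Your proposal is correct and takes essentially the same route as the paper: the corollary is obtained by feeding the chordal separator construction of Theorem~\ref{th-chordal} into the generic reduction of Theorem~\ref{treewidth}. The paper states this combination in a single line without addressing the arbitrary-subset-$U$ (weighted) interface that Theorem~\ref{treewidth} actually requires, so your careful bridging via the $w(v)=1/|U|$ weight function is a correct filling-in of a detail the authors leave implicit.
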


\section{Conclusion}

We studied \textsc{Reachability} problem on three important graph families and obtained space-efficient algorithms for each of these classes. An interesting open problem is whether one can obtain a space-efficient algorithm for intersection graphs of Jordan regions when the embedding of the graph is not provided in the input. Another significant open problem is to study the \textsc{Reachability} for unit disk intersection graphs and obtain a space-efficient algorithm. The method that we used for unit contact disk graphs does not generalize to unit disk intersection graphs. In the latter case, there can be arbitrarily large directed cliques, and it is not possible to obtain auxiliary graphs while preserving reachability information between every pair of vertices. However, we believe that our method can be used to solve to \textsc{Reachability} for other classes of geometric contact graphs.

\bibliography{ref}

\end{document}